\documentclass[12pt,reqno,final]{amsart}

\usepackage[color]{showkeys}     

\definecolor{refkey}{gray}{.5}   

\definecolor{labelkey}{gray}{.5} 

\headheight=6.15pt
\textheight=8.75in
\textwidth=6.5in
\oddsidemargin=0in
\evensidemargin=0in
\topmargin=0in

\usepackage{color}

\usepackage{epsfig}
\usepackage{amsmath, amsthm, amsfonts}
\usepackage{graphicx}
\usepackage{psfrag}

\numberwithin{equation}{section}

\newcommand{\R}{{\mathbb R}}
\newcommand{\C}{{\mathbb C}}
\newcommand{\N}{{\mathbb N}}

\newcommand{\Z}{{\mathbb Z}}

\newcommand{\Ai}{{\operatorname{Ai}}}

\newcommand{\al}{\alpha}

\newcommand{\Ga}{\Gamma}

\newcommand{\la}{\lambda}
\newcommand{\ep}{\varepsilon}
\newcommand{\de}{\delta}

\newcommand{\sg}{\sigma}

\newcommand{\z}{\zeta}

\newcommand{\hcal}{{\mathcal H}}

\newcommand{\mcal}{{\mathcal M}}
\newcommand{\acal}{{\mathcal A}}

\newcommand{\fcal}{{\mathcal F}}
\newcommand{\tcal}{{\mathcal T}}
\newcommand{\A}{{\mathbf A}}
\newcommand{\B}{{\mathbf B}}
\newcommand{\1}{{\mathbf 1}}
\newcommand{\Th}{{\Theta}}
\newtheorem{theo}{{\sc \bf Theorem}}[section]
\newtheorem{cor}[theo]{{\sc \bf Corollary}}

\newtheorem{prop}[theo]{{\sc \bf Proposition}}

\newenvironment{rem}{\medskip\noindent{\it Remark:\/} }{\medskip}

\newtheorem{lemma}{Lemma}[section]

\begin{document}

\title{Tail decay for the distribution of the endpoint of a directed polymer}
\author{Thomas Bothner}
\address{Department of Mathematical Sciences, Indiana University-Purdue University Indianapolis, 402 N. Blackford St., Indianapolis, IN 46202, U.S.A.}
\email{tbothner@iupui.edu}
\author{Karl Liechty}
\address{Department of Mathematics,
University of Michigan,
530 Church St., Ann Arbor, MI 48109, U.S.A.}
\email{kliechty@umich.edu}

\thanks{Part of this work was done during the workshop ``Vector equilibrium problems and their applications to random matrix models" hosted at the American Institute of Mathematics in April 2012. Both authors would like to thank the organizers for their hospitality.  KL would also like to thank Jinho Baik for helpful comments.}
\date{\today}

\begin{abstract}
We obtain an asymptotic expansion for the tails of the random variable $\tcal=\arg\max_{u\in\mathbb{R}}(\mathcal{A}_2(u)-u^2)$ where $\mathcal{A}_2$ is the Airy$_2$ process. Using the formula of Schehr \cite{Sch} that connects the density function of $\tcal$ to the Hastings-McLeod solution of the second Painlev\'e equation, we prove that as $t\rightarrow\infty$, $\mathbb{P}(|\tcal|>t)=Ce^{-\frac{4}{3}\varphi(t)}t^{-145/32}(1+O(t^{-3/4}))$, where $\varphi(t)=t^3-2t^{3/2}+3t^{3/4}$, and the constant $C$ is given explicitly.


\end{abstract}

\maketitle

\section{Introduction and statement of the main result}

Directed polymers in a random medium (DPRM) were introduced by Huse and Henley \cite{HH} to describe domain walls in a ferromagnetic Ising model with random impurities, sometimes called a dirty ferromagnet.  In the two dimensional Ising model, a typical domain wall is a lattice path in the plane, and DPRM 
 in 1+1 dimensions is a statistical mechanical model whose states are such paths.  For concreteness, we consider the square lattice $\N \times \N$, and the graph with edges connecting nearest neighbors,  so that the midpoints of the vertical edges have the Cartesian coordinates $(i, j+1/2)$ for some $i,j \in \N$, and the horizontal edges have the coordinates $(i+1/2,j)$ for some $i,j\in \N$.  On each vertex, we place a random weight $\ep_{ij}$ independently from some distribution, and consider some set of allowable lattice paths $\Ga$ which originate at $(0,0)$ and always move up and to the right.  We then define the random Gibbs measure on $\Ga$ as follows.  For a path $P\in \Ga$,
\begin{equation*}\label{in:1}
\mu(P)= \frac{1}{Z} \exp\left[\frac{1}{T} \sum_{(i,j) \in P} \ep_{ij}\right]\,,
\end{equation*}
where
\begin{equation*}\label{in:1a}
Z= \sum_{P\in \Ga} \exp\left[\frac{1}{T} \sum_{(i,j) \in P} \ep_{ij}\right]\,,
\end{equation*}
is the (random) partition function, and $T>0$ denotes temperature.  For the set of allowable paths $\Ga$, usually we consider 
\begin{equation*}\label{in:1b}
\Ga\equiv \Ga_{(m,n)} = \{ \textrm{up-right paths ending at the point } (m,n)\}\,,
\end{equation*}
in which case the model is said to have {\it point-to-point geometry}, or we consider
\begin{equation*}\label{in:1c}
\Ga\equiv \Ga_n = \{ \textrm{up-right paths of length } n \}\,,
\end{equation*}
in which case the model is said to have {\it point-to-line geometry}.  
\smallskip

In the limit as $T \to 0$, the Gibbs measure becomes a delta function on the path with the greatest weight, and the randomness in the model comes entirely from the random weights $\ep_{ij}$.  This is known as {\it directed last passage percolation}.    In the case of point-to-line directed last passage percolation with geometric or exponential weights on sites, it has been proven \cite{Joh} that the limiting fluctuations of both the energy of the maximizing path, and of the location of the endpoint of the polymer can be described in terms of the Airy$_2$ process.  
\smallskip

The Airy$_2$ process \cite{PS}, which we denote $\acal_2(u)$, is a stationary process whose marginal distributions are the Gaussian unitary ensemble (GUE) Tracy-Widom distribution \cite{TW}, and is expected to be a universal process governing the limiting spatial fluctuations of random growth models in the Kardar-Parisi-Zhang (KPZ) universality class \cite{KPZ} in $1+1$ dimension.  This has been proven in the case of the polynuclear growth model \cite{Joh, PS}. Now let $\mcal$ and $\tcal$ be the random variables defined by
\begin{equation*}\label{in:2}
	\mcal := \max_{u \in \mathbb{R}} \left( {\mathcal A}_2(u) - u^2 \right) \;,
\end{equation*}
and
\begin{equation*}\label{in:3}
	\tcal := \arg \max_{u \in \mathbb{R}} \left( {\mathcal A}_2(u) - u^2 \right) \;.
\end{equation*}
Then $\mcal$ describes the limiting fluctuations of the energy of the maximizing path in geometric point-to-line last passage percolation.  The limiting fluctuations of the endpoint of the path are described by $\tcal$.  This fact was proved by Johansson \cite{Joh} assuming that the maximum of $\acal_2(u)-u^2$ is attained at a unique point almost surely.  This assumption was later proved by Corwin and Hammond \cite{CH}.  For DPRM at finite temperature, there are some recent results.  For a continuum version and a semi-discrete version \cite{OY} of DPRM, as well as discrete DPRM with log-Gamma weights, the correct scaling exponents and a limit theorem for the free energy have been obtained at finite temperature \cite{BQS, ACQ, BCF, Sep, BC, CQ, SV}.  A limit theorem for the fluctuations of the endpoint of the polymer with point-to-line geometry, however, has only been proven in the case of geometric or exponential last passage percolation \cite{Joh}.  Nonetheless, $\tcal$ is expected to govern the fluctuations of the end-point of a polymer in DPRM at finite temperature as well, for a wide range of random weights $\ep_{ij}$.    For a review of the KPZ universality class in the physical literature, see \cite{HHZ}. For a more recent review in the mathematics literature, see \cite{Cor}.

\begin{rem}Although $\tcal$ has only been proven rigorously to govern endpoint fluctuations in the case of geometric last passage percolation, a formula for the endpoint fluctuations of a continuum version of DPRM was recently obtained nonrigorously by Dotsenko \cite{Dot}. This formula is equivalent to the known formulas for $\tcal$, given in \eqref{in:70} and \eqref{in:12} below.  The author of that paper was unable to prove this equivalence, and for the sake of completeness in the literature, we give a short proof in Appendix \ref{dot}.
\end{rem}
\smallskip

The distribution of $\mcal$ is, up to rescaling by a constant, the same as the Tracy-Widom distribution for the Gaussian orthogonal ensemble (GOE) \cite{TW2}.  Specifically, we have
\begin{equation*}
\mathbb{P}\left(\mcal \le t\right)=\fcal_1(2^{2/3} t)\,,
\end{equation*}
where $\fcal_1$ is the Tracy-Widom GOE distribution function, defined below in \eqref{TWA} and \eqref{TWP}.  This fact was proved by Johansson \cite{Joh} by first proving a functional limit theorem for the convergence of the polynuclear growth (PNG) model to the $\textrm{Airy}_2$ process and using connections between PNG and the longest increasing subsequence of a random permutation found by Baik and Rains \cite{BR}.    A more direct, although nonrigorous, proof was given in \cite{FMS} by analyzing the fluctuations of nonintersecting Brownian excursions.  A rigorous direct proof based on the explicit determinantal formula for the Airy$_2$ process was given in \cite{CQR}, and the approach of \cite{FMS} was made rigorous in \cite{Lie}. Since the Tracy-Widom GOE distribution has been well studied over the past 15-20 years, a lot is known about the distribution of $\mcal$.  In particular, the asymmetric tail behavior of $\fcal_1$ is
\begin{equation}\label{GOEtail}
\fcal_1(s)=\left\{
\begin{aligned}
1-\frac{e^{-\frac{2}{3}s^{3/2}}}{4\sqrt{\pi} s^{3/2}}\bigg(1+O\left(s^{-3/2}\right)\bigg)\,, \quad \textrm{as} \ s\to +\infty \\
\frac{\tau_1 e^{-\frac{|s|^3}{24}-\frac{|s|^{3/2}}{3\sqrt{2}}}}{|s|^{1/16}}\bigg(1+O\left(|s|^{-3/2}\right)\bigg)\,,
\quad \textrm{as} \ s\to -\infty\,,
\end{aligned}\right.
\end{equation}
where
\begin{equation*}
\tau_1=\frac{e^{\frac{1}{2}\z'(-1)}}{2^{11/48}},
\end{equation*}
and $\z(\cdot)$ denotes the Riemann zeta-function.
Similar formulas exist for the Tracy-Widom GUE and Gaussian symplectic ensemble (GSE) distribution functions, see \cite{BBD, BBDI, DIK}.  Recently, some similar formulas have appeared for the general $\beta$ Tracy-Widom distributions as well \cite{BEMN, BN, DV}. 

\smallskip

Much less attention has been dedicated to the study of $\tcal$. Exact expressions for the joint distribution of $(\mcal, \tcal)$ were obtained in two recent papers: in
\cite{M-FQR} by Moreno Flores, Quastel, and Remenik;
and  in \cite{Sch} by Schehr.  The formula of \cite{M-FQR} involves the Airy function and the resolvent of an associated operator and is derived rigorously, while the formula of \cite{Sch} involves a solution to the Lax pair for the Painlev\'e II equation, and is derived nonrigorously.  It was shown in \cite{BLS} that these formulas are indeed the same, and therefore the formula of \cite{Sch} is put on rigorous footing.
\smallskip

Let us describe the two formulas.  In order to do so, we first need to fix some notation which we will use throughout the paper.  Let $\hat {P}(m,t)$ denote the joint density function of $(\mcal, \tcal)$. 
Let $\Ai(x)$ be the Airy function \cite{BE}, and let $\B_s$ be the integral operator acting on $L^2[0,\infty)$ with kernel
\begin{equation}\label{in:4}
	\B_s(x,y)=\Ai(x+y+s),
\end{equation}
where $s\in \R$ is a parameter.  It is known that $\1-\B_s$ is invertible for any $s\in \R$. Let
\begin{equation}\label{in:5}
	\rho_s(x,y)=(\1-\B_s)^{-1} (x,y),  \qquad x,y\ge 0 \,.
\end{equation}
Define, for $t, m\in \R$, 
\begin{equation*}
	\psi(x; t, m) = 2 e^{x\, t} [t {\rm Ai}(t^2+m+x) + {\rm Ai}'(t^2+m+x)].
\end{equation*}
Then the formula of \cite{M-FQR} is 
\begin{equation}\label{in:70}
	\hat {P}(m,t) 
	= 2^{1/3} {\mathcal F}_1(2^{2/3} m) \int_0^{\infty}\!\!\int_{0}^\infty \psi(2^{1/3}x_1;-t,m) \rho_{2^{2/3}m}(x_1,x_2) \, \psi(2^{1/3} x_2;t,m)dx_1dx_2.
\end{equation}
In terms of the operator $\B_s$, the GOE Tracy-Widom distribution function $\fcal_1$ equals
\begin{equation}\label{TWA}
{\mathcal F}_1(s)= \det(\1-\B_s),
\end{equation} see \cite{FS, Sas}.
\smallskip

We now present the formula of \cite{Sch}. To this end let $q(s)$ be the particular solution of the second Painlev\'{e} equation,
\begin{equation*}
	q''(s)=sq(s)+2q(s)^3\,,
\end{equation*}
satisfying 
\begin{equation*}
	q(s) \sim \Ai(s)\,, \quad \textrm{as} \ s \to + \infty \;.
\end{equation*}
This particular solution is known as the Hastings-McLeod solution \cite{HastingsMcLeod, Itsbook},  
and its uniqueness and global existence are well established. Consider now the following Lax-system (cf. \cite{BI}) associated  
to the Hastings-McLeod solution of the Painlev\'e II equation, i.e. the system of linear differential equations for a two-dimensional vector 
$\Phi=\Phi(\zeta,s)$, 
\begin{equation}\label{in:6}
 \frac{\partial\Phi}{\partial \zeta} = A \Phi \;, \quad \; \frac{\partial\Phi}{\partial s} = B \Phi \;,
 \end{equation}
 where the $2 \times 2$ matrices $A = A(\zeta, s)$ and $B = B(\zeta,s)$ are given by
 \begin{equation}\label{in:7}
 A(\zeta,s) = \left( 
 \begin{array}{c c}
 4 \zeta q &  4 \zeta^2 + s + 2q^2 + 2q'\\
 -4 \zeta^2 - s - 2 q^2 + 2q' & -4 \zeta q
 \end{array}\right) \;,
\end{equation}
and 
\begin{equation}\label{in:8}
  B(\zeta,s) = \left( 
 \begin{array}{c c}
 q &  \zeta \\
 -\zeta & - q
 \end{array}\right) \;.
\end{equation}
The above system \eqref{in:6} is overdetermined, and the compatibility of the equations implies that $q(s)$ solves the Painlev\'e II equation. 
Now let $\Phi =\binom{\Phi_1}{\Phi_2}$ be the unique solution of (\ref{in:6}) which satisfies  
the real asymptotics
\begin{equation*}\label{in:9}
	\Phi_1(\z;s)=\cos\left(\frac{4}{3}\z^3+s\z\right)+O(\z^{-1}), \quad \Phi_2(\z;s)=-\sin\left(\frac{4}{3}\z^3+s\z\right)+O(\z^{-1})\,,
\end{equation*}
as $\z \to \pm \infty$ for $s\in \R$. Such a solution exists (see e.g. \cite{BI, DZ, Itsbook}), and it  further satisfies the property that $\Phi_1(\z;s)$ and $\Phi_2(\z;s)$ are real for real $\z$ and $s$, as well as the symmetry relations
\begin{equation*}
	\Phi_1(-\z;s)=\Phi_1(\z;s), \hspace{1cm} \Phi_2(-\z;s)=-\Phi_2(\z;s),\hspace{0.5cm}\z\in\mathbb{C},s\in\mathbb{R}.
\end{equation*}
Define for non-negative $w$ and real $s$,
\begin{equation}\label{in:11}
	h(s,w) :=  \int_{0}^\infty  \zeta \Phi_2(\zeta;s) e^{- w \zeta^2} \, d\zeta\,.
\end{equation}
The formula of \cite{Sch} for the joint density of $\mcal$ and $\tcal$ is
\begin{equation}\label{in:12}
\begin{aligned}
	{\hat P}(m,t) &= 4 P(2^{2/3}m, 2^{4/3} t) \;, 
\end{aligned}
\end{equation}
where
\begin{equation}\label{in:13}
	P(s,w) = \frac{4}{\pi^2} {\mathcal F}_1(s) \int_{s}^\infty h(u,w) h(u,-w) \, du,
\end{equation}
and $h(s,-w)$, $w>0$ is understood as the analytic continuation of $h(\cdot,w)$ to the negative real axis.
In terms of the Painlev\'{e} function $q(s)$, the Tracy-Widom GOE distribution function $\fcal_1$ can be written as \cite{TW2}
\begin{equation}\label{TWP}
\fcal_1(s)=\exp\left[-\frac{1}{2}\int_s^\infty q(x)dx-\frac{1}{2} \int_s^\infty\int_t^\infty q(x)^2\,dx\,dt\right]\,.
\end{equation}
 The marginal density function for $\tcal$ equals
\begin{equation*}\label{in:14}
	{\hat P}(t):=\int_{-\infty}^\infty {\hat P}(m,t)\,dm\,,
\end{equation*}
and using \eqref{in:12}, we have
\begin{equation}\label{in:14f}
		{\hat P}(t):=2^{4/3}P(2^{4/3} t)\,,
\end{equation}
where we introduced
\begin{equation}\label{in:14b}
		P(w)=\int_{-\infty}^\infty P(s, w)\,ds\,.
\end{equation}
Our main result in the present paper is an asymptotic expansion of the density function  ${\hat P}(t)$ for large $t$.  

\begin{theo}\label{main} As $t\to \infty$, the marginal density function  ${\hat P}(t)$ satisfies
\begin{equation*}\label{m:1}
 {\hat P}(t)= \tau e^{-\frac{4}{3}\varphi(t)}t^{-81/32}\left(1+\frac{15}{4t^{3/4}}+O(t^{-3/2})\right)\,,
 \end{equation*}
which extends to a full asymptotic series in powers of $t^{-3/4}$.  The function $\varphi(t)$ is
\begin{equation*}
\varphi(t)=t^3-2t^{3/2}+3t^{3/4},
\end{equation*}
and the constant $\tau$ is given by
\begin{equation*}\label{m:2}
	\tau= 2^{-29/6}e^{5/4}e^{\frac{1}{2}\z'(-1)} \pi^{3/2},
\end{equation*}
with $\zeta(z)$ denoting the Riemann zeta-function.
\end{theo}
Integrating the above density, Theorem \ref{main} implies the following corollary.
\begin{cor}\label{main2}
As $t \to \infty$, 
\begin{equation}\label{m:3}
	\mathbb{P}(|\tcal|> t)= Ce^{-\frac{4}{3}\varphi(t)}t^{-145/32}\left(1+\frac{15}{4t^{3/4}}+O(t^{-3/2})\right)\,.
\end{equation}
where $C=\tau/2$.
\end{cor}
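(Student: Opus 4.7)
The plan is to integrate the pointwise asymptotic expansion of $\hat{P}(t)$ from Theorem \ref{main} over the tail. The first observation is that $\hat{P}$ is even, $\hat{P}(-t)=\hat{P}(t)$. This is manifest from the Schehr formula \eqref{in:12}--\eqref{in:13}: the integrand of \eqref{in:13} contains the symmetric product $h(u,w)h(u,-w)$, so $P(s,w)$ is even in $w$, whence $\hat{P}(m,t)=4P(2^{2/3}m,2^{4/3}t)$ is even in $t$, and integration over $m$ preserves this. Consequently,
\begin{equation*}
\mathbb{P}(|\tcal|>t)=2\int_t^\infty \hat{P}(s)\,ds,
\end{equation*}
and the corollary reduces to extracting large-$t$ asymptotics of this tail integral.

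I would handle that integral by Laplace's method, i.e.\ integration by parts. With $f(s)=\frac{4}{3}\varphi(s)$, one has $f'(s)=4s^2-4s^{1/2}+3s^{-1/4}=4s^2(1-s^{-3/2}+\tfrac{3}{4}s^{-9/4})$, and
\begin{equation*}
\int_t^\infty e^{-f(s)}g(s)\,ds=\frac{g(t)}{f'(t)}e^{-f(t)}+\int_t^\infty e^{-f(s)}\frac{d}{ds}\!\left[\frac{g(s)}{f'(s)}\right]\!ds.
\end{equation*}
Applying the same identity once more to the remainder shows it is smaller than the boundary term by a factor $O(t^{-3})$, hence negligible at the displayed precision. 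Substituting $g(s)=\tau s^{-81/32}(1+\tfrac{15}{4}s^{-3/4}+O(s^{-3/2}))$ from Theorem \ref{main} and using $1/f'(t)=(4t^2)^{-1}(1+O(t^{-3/2}))$, the boundary term evaluates to
\begin{equation*}
\frac{\tau t^{-81/32}}{4t^2}\!\left(1+\frac{15}{4}t^{-3/4}+O(t^{-3/2})\right)e^{-\frac{4}{3}\varphi(t)}=\frac{\tau}{4}\,t^{-145/32}\,e^{-\frac{4}{3}\varphi(t)}\!\left(1+\frac{15}{4t^{3/4}}+O(t^{-3/2})\right),
\end{equation*}
using $-81/32-2=-145/32$. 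Multiplication by $2$ yields \eqref{m:3} with $C=\tau/2$.

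The argument is routine once Theorem \ref{main} is in hand, and I would not expect a serious conceptual obstacle. The one point needing careful verification is that the Laplace expansion preserves the coefficient $15/4$ unchanged: the corrections to $1/f'(t)$ enter only at order $t^{-3/2}$ (since $\varphi'$ contains no $s^{5/4}$ term), so no $t^{-3/4}$ contribution is produced by the expansion of $1/f'(t)$, and the $15/4$ coefficient from Theorem \ref{main} is transferred through unaltered. Since Theorem \ref{main} in fact extends to a full asymptotic series in powers of $t^{-3/4}$, iterating the integration by parts yields a corresponding full asymptotic series for $\mathbb{P}(|\tcal|>t)$, of which \eqref{m:3} records only the first two terms.
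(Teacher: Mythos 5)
Your proof is correct and follows essentially the same route as the paper: use the evenness of $\hat P$ to reduce to $2\int_t^\infty \hat P(s)\,ds$, then evaluate the tail by a Laplace/integration-by-parts argument. The paper first substitutes $u=\varphi(s)$ to normalize the exponent before integrating by parts, whereas you integrate by parts directly in the $s$-variable with $f=\tfrac{4}{3}\varphi$; the arithmetic is equivalent, and your observation that $1/f'(t)$ carries no $t^{-3/4}$ correction (because $\varphi'$ has no $s^{5/4}$ term) correctly explains why the $15/4$ coefficient passes through unchanged.
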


\begin{rem}
The leading decay order of $e^{-ct^3}$ in \eqref{m:3} was first predicted in the physics literature by Halpin-Healy and Zhang \cite{HHZ}, compare also \cite{MP}.  The first rigorous confirmation of this rate of decay appeared in the paper \cite{CH} of Corwin and Hammond, in which they give $e^{-ct^3}$ as an upper bound on $\mathbb{P}(|\tcal|> t)$, although they do not give the value of the constant $c$.   In the paper \cite{Sch} of Schehr, the author uses the formula \eqref{in:12} to find that the leading coefficient is $c=4/3$, although he did not employ complete and rigorous estimates. Recently Quastel and Remenik \cite{QR} rigorously obtained explicit bounds on $c$, and remarked that they believe that  the correct rate of decay is in fact $4/3$.  Corollary \ref{main2} confirms this rigorously, and in addition gives subleading terms and constants.  In principle, all terms in the asymptotic expansion are computable by the methods of this paper, but calculations become more involved.
\end{rem}

The setup for the remainder of the paper is a follows. We will prove Theorem \ref{main} using \eqref{in:13}, \eqref{in:14f} and \eqref{in:14b}.  In Section \ref{plusinf} we evaluate $h(s,w)$ as $w\to\infty$ by analyzing the system \eqref{in:6} close to the origin.  In Section \ref{minusinf} we prepare $h(s,-w)$ for asymptotic analysis as $w\rightarrow \infty$ using the global identity from \cite{BLS} which allows us to express $\Phi_2(\z;s)$ in terms of the resolvent of a Hankel operator on $L^2[0,\infty)$ whose kernel is constructed out of the Airy function. Then in Section \ref{largeexp} we split $P(w)$ into two parts, one which can be estimated using asymptotics \eqref{GOEtail} of the Tracy-Widom distribution function $\fcal_1(s)$, and another which we evaluate asymptotically by Laplace's method. Theorem \ref{main} then follows via \eqref{in:14f}, and the proof of Corollary \ref{main2} will be given in Section \ref{proofcor}.

\medskip


\section{Expansion of $h(s,w)$ as $w\to +\infty$}\label{plusinf}
With the change of variables $\la=\z \sqrt{2w}$, (\ref{in:11}) becomes
\begin{equation}\label{h:1}
h(s,w)=\frac{1}{2w} \int_0^\infty \la \Phi_2 \left(\frac{\la}{\sqrt{2w}}; s\right) e^{-\frac{\la^2}{2}} d\la\,.
\end{equation}
We thus see that we need the Taylor expansion of  $\Phi_2(\z;s)$ at $\z =0$. To this end let us consider \eqref{in:6} as a system for a $2\times 2$ matrix-valued function $\Psi$,
\begin{equation}\label{h:2}
	\frac{\partial \Psi}{\partial \z} = \bigg(\sum_{n=0}^2A_n(s)\z^n\bigg)\Psi,\ \ \frac{\partial \Psi}{\partial \z} = \bigg(\sum_{n=0}^1B_n(s)\z^n\bigg)\Psi,\ \ \  \Psi(\zeta,s)=\big(\Phi(\zeta,s),\hat{\Phi}(\zeta,s)\big)\,,
\end{equation}
where
\begin{equation*}\label{h:2a}
\begin{aligned}
	A_0=\begin{pmatrix} 
		0 & v(s)+4q'(s) \\
 		-v(s) & 0 
 			\end{pmatrix}\,, \qquad A_1&=\begin{pmatrix}
 			  4q(s) & 0 \\ 
 			  0 & -4q(s) 
 			  \end{pmatrix}\,, \qquad A_2&=\begin{pmatrix}
 			   0 & 4 \\
 			   -4 & 0 
 			   \end{pmatrix}\,, \\
B_0&=\begin{pmatrix}  
q(s) & 0 \\
 0 & -q(s)
  \end{pmatrix}\,, \qquad \hspace{0.45cm}B_1&=\begin{pmatrix}  0 & 1 \\ -1 & 0 \end{pmatrix}\,,
\end{aligned}
\end{equation*}
are determined from \eqref{in:7} and \eqref{in:8}, and
\begin{equation}\label{h:2b}
v(s):=s+2q(s)^2-2q'(s)\,.
\end{equation}
The matrix function $\Psi$ can alternatively be defined as the solution of a certain oscillatory Riemann-Hilbert problem, see Appendix \ref{app1}. Such Riemann-Hilbert formulation appears in one form or another throughout the literature on the integrable structure of Painlev\'{e} II, and is the basis for many results (cf. \cite{Itsbook}).  In particular, we will use the facts that
\begin{equation}\label{rhp-5}
	\Psi(-\z;s)=\sg_3 \Psi(\z;s) \sg_3,\hspace{0.5cm}\z\in\mathbb{C},\ s\in\mathbb{R},
\end{equation}
and
\begin{equation}\label{rhp-6}
	\Psi(0;s) = e^{-\sg_3 \int_s^\infty q(t)\,dt}\,,
\end{equation}
which are justified in Appendix \ref{app1}.
\smallskip

Since the coefficient function $A(\z,s)$ is analytic at the origin, there exists a solution $\Psi(\z,s)$ of equation \eqref{h:2} which is holomorphic in some neighborhood of the origin and uniquely determined by its value $\Psi(0,s) = \Psi_0(s)$ (cf. \cite{W}).  The Taylor expansion of $\Psi$ about $\z=0$ is
\begin{equation}\label{h:3}
	\Psi(\z,s) = \sum_{n=0}^{\infty}\Psi_n(s)\zeta^n,\ \ |\z|<\rho,\ \ \rho>0.
\end{equation}
Inserting \eqref{h:3} into \eqref{h:2} and comparing coefficients yields for $n\in\Z_{\geq 0}$ first
\begin{equation}\label{h:4}
	(n+1)\Psi_{n+1}=A_0\Psi_n(s)+A_1\Psi_{n-1}(s)+A_2\Psi_{n-2}(s),\ \ \ \Psi_{-1}(s)=\Psi_{-2}(s)\equiv 0,
\end{equation}
and secondly
\begin{equation*}\label{h:4a}
	\frac{d\Psi_n}{ds} = q(s)\sigma_3\Psi_n(s)+i\sigma_2\Psi_{n-1}(s),\ \ \ \sigma_3=\begin{pmatrix}
	1 & 0\\
	0 & -1\\
	\end{pmatrix},\ \sigma_2=\begin{pmatrix}0 & -i\\
	i & 0\\
	\end{pmatrix}.
\end{equation*}
The latter differential recursion implies
\begin{equation*}\label{h:5}
	\Psi_0(s) = e^{-\sigma_3\int_s^{\infty}q(x)dx}C_0,
\end{equation*}
for some $2\times 2$ invertible matrix valued constant $C_0$.  In fact, from (\ref{rhp-6}) we see that $C_0= I$.
In order to obtain the rest of the Taylor coefficients, we use the recursion \eqref{h:4} to write $\Psi_k$ explicitly in terms of the matrices $A_0, A_1, A_2$.  The first two coefficients are
\begin{equation}\label{h:6}
	\Psi_1(s) = A_0e^{-\sigma_3\int_s^{\infty}q(x)dx},\ \ \Psi_2(s)=\frac{1}{2}\big(A_0^2+A_1\big)e^{-\sigma_3\int_s^{\infty}q(x)dx}.
\end{equation}
At this point we would like to connect the coefficients $\Psi_n(s)$ to the required values of $\Phi_1(\z;s)$ and $\Phi_2(\z;s)$.
The aformentioned global symmetry relation \eqref{rhp-5} implies
\begin{equation*}\label{h:8}
	\Psi_n(s) = (-1)^n\sigma_3\Psi_n(s)\sigma_3,\ n\in\Z_{\geq 0},
\end{equation*}
i.e. all coefficients $\Psi_{2n}(s)$ are diagonal whereas all coefficients $\Psi_{2n+1}(s)$ are off-diagonal.  We therefore have
\begin{equation}\label{h:9}
	\Phi_1(\z;s) = \sum_{n=0}^{\infty}\Psi_{2n}^{(11)}(s)\z^{2n},\ \ \ \Phi_2(\z;s)=\sum_{n=0}^{\infty}\Psi_{2n+1}^{(21)}(s)\z^{2n+1},\ \ \ \Psi_n=\big(\Psi_n^{(ij)}\big)_{i,j=1}^2,\ \ |\z|<\rho.
\end{equation}
From \eqref{h:4} (see explicitly \eqref{h:6}),
\begin{equation}\label{h:10}
	\Psi_{2n+1}^{(21)}(s) = P_n\big(s,q(s),q'(s)\big)e^{-\int_s^{\infty}q(x)dx},
\end{equation}
with polynomials $P_n$ in three variables. Integrating \eqref{h:1}, we obtain the following Proposition.
\begin{prop}\label{prop0} As $w\to +\infty$, the function $h(s,w)$ satisfies the asymptotic expansion
\begin{equation}\label{h:11a}
		h(s,w) =\frac{\sqrt{\pi}}{4w^{3/2}} e^{-\int_s^\infty q(x)\,dx} \sum_{n=0}^\infty Q_n(s,q(s),q'(s))w^{-n}\,,
\end{equation}
where
\begin{equation*}\label{h:11b}
		Q_n(s,q(s),q'(s))=e^{\int_s^\infty q(x)\,dx} \Psi_{2n+1}^{(21)}(s) \frac{(2n+1)!}{n! 4^{n}},
\end{equation*} 
is a polynomial in 3 variables.  This expansion is uniform in $s$ on compact subsets of the real line.  The first two terms in the expansion \eqref{h:11a} are
\begin{equation}\label{Q01}
		Q_0(s,q(s),q'(s))=-v(s)\,, \qquad Q_1(s,q(s),q'(s))=\frac{v(s)^3}{4}+v(s)\big(v(s)q'(s)+q(s)\big)-2\,,
\end{equation}
where $v(s)$ is defined in \eqref{h:2b}.
Moreover, as $s\to -\infty$, the polynomial $Q_n(s,q(s), q'(s))$ satisfies
\begin{equation}\label{h:11c}
		Q_n(s,q(s),q'(s))=O\left((-s)^{(n-1)/2}\right).
\end{equation} 
\end{prop}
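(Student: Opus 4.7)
\medskip\noindent
\textbf{Proof plan.} Inserting the convergent Taylor series \eqref{h:9} for $\Phi_2(\z;s)$ into \eqref{h:1} and using the Gaussian moment
\[
\int_0^\infty\la^{2n+2}e^{-\la^2/2}\,d\la=2^{n+1/2}\Gamma(n+\tfrac32)=\frac{\sqrt\pi\,(2n+1)!}{2^{n+1}\,n!}
\]
produces the formal series $\frac{\sqrt\pi}{4w^{3/2}}\sum_{n\ge0}\Psi_{2n+1}^{(21)}(s)\,\frac{(2n+1)!}{n!\,4^n}\,w^{-n}$. Factoring $e^{-\int_s^\infty q(x)\,dx}$ out of each $\Psi_{2n+1}^{(21)}$ via \eqref{h:10} yields the stated formula with
\[
Q_n(s,q(s),q'(s))=\frac{(2n+1)!}{n!\,4^n}\,e^{\int_s^\infty q(x)\,dx}\,\Psi_{2n+1}^{(21)}(s),
\]
which is polynomial in $(s,q,q')$ because the matrix recursion \eqref{h:4} starting from $\Psi_0=e^{-\sg_3\int q}$ only couples entries through $A_0,A_1,A_2$, whose entries are polynomial in $v(s)=s+2q^2-2q'$, $q$ and $q'$. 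To promote this to a genuine asymptotic expansion, I would truncate the Taylor series at order $2N+1$ and split the integration at $\la=\sqrt{2w}$: on $[0,\sqrt{2w}]$ the Taylor remainder is bounded by $C\,(\la/\sqrt{2w})^{2N+3}$, contributing $O(w^{-N-3/2})$ after the Gaussian integration, while on $[\sqrt{2w},\infty)$ the boundedness of $\Phi_2(\,\cdot\,;s)$ on the real line, visible from its oscillatory asymptotics as $\z\to\pm\infty$, combined with the Gaussian tail produces an exponentially small correction. Uniformity in $s$ on compact sets follows from joint real-analyticity of $\Phi_2(\z;s)$.

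\medskip\noindent
The explicit values \eqref{Q01} are obtained by iterating \eqref{h:4}. With $\Psi_0=\diag(E_-,E_+)$, where $E_\pm:=e^{\pm\int_s^\infty q(x)\,dx}$, the symmetry \eqref{rhp-5} forces each $\Psi_{2n+1}$ to be off-diagonal. Then $\Psi_1=A_0\Psi_0$ gives $\Psi_1^{(21)}=-vE_-$, so $Q_0=-v$. From $\Psi_2=\tfrac12(A_0^2+A_1)\Psi_0$ and $3\Psi_3=A_0\Psi_2+A_1\Psi_1+A_2\Psi_0$ one reads off $\Psi_3^{(21)}=\tfrac{E_-}{6}(v^3+4v^2q'+4qv-8)$, whence $Q_1=\tfrac32\,e^{\int q}\Psi_3^{(21)}=\tfrac{v^3}{4}+v^2q'+qv-2$, matching \eqref{Q01}.

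\medskip\noindent
The main technical step, and the hardest part of the proof, is the bound \eqref{h:11c}. I would invoke the Hastings--McLeod asymptotics $q(s)^2=-s/2+O(|s|^{-2})$ and $q'(s)=O(|s|^{-1/2})$ as $s\to-\infty$, which yield $v(s)=s+2q^2-2q'=O(|s|^{-1/2})$. A naive estimate on \eqref{h:4} fails because $A_1=4q\sg_3$ has entries of size $(-s)^{1/2}$; the fix is to track the block structure jointly. Writing
\[
\Psi_n(s)=\begin{pmatrix}a_n(s)\,E_- & b_n(s)\,E_+\\ c_n(s)\,E_- & d_n(s)\,E_+\end{pmatrix},
\]
the parity symmetry \eqref{rhp-5} gives $a_{2m+1}=d_{2m+1}=b_{2m}=c_{2m}=0$, and \eqref{h:4} reduces to the coupled pair
\begin{align*}
(2m+1)\,c_{2m+1}&=-v\,a_{2m}-4q\,c_{2m-1}-4\,a_{2m-2},\\
2m\,a_{2m}&=(v+4q')\,c_{2m-1}+4q\,a_{2m-2}+4\,c_{2m-3}.
\end{align*}
I would then prove by simultaneous induction on $m$ that, for $-s$ sufficiently large,
\[
|c_{2m+1}(s)|\le C_m\,(-s)^{(m-1)/2},\qquad |a_{2m}(s)|\le C_m\,(-s)^{m/2}.
\]
The inductive step uses $v,\,v+4q'=O(|s|^{-1/2})$ and $q=O(|s|^{1/2})$: all three terms on the right of the $c$-recursion are then $O((-s)^{(m-1)/2})$, while the dominant term $4q\,a_{2m-2}$ in the $a$-recursion is $O((-s)^{m/2})$, closing the induction. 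The base cases $m=0,1$ are immediate from the explicit formulas of the previous paragraph. Since $Q_n=\frac{(2n+1)!}{n!\,4^n}\,c_{2n+1}$, this proves \eqref{h:11c}. The subtlety making this the main obstacle is the choice of the coupled ansatz: the diagonal coefficients $a_{2m}$ grow only as $(-s)^{m/2}$ rather than the $(-s)^{m-1/2}$ one would naively expect from iterating $A_1\sim 4q\sg_3$, and it is precisely this cancellation that closes the induction at the sharp rate $(-s)^{(m-1)/2}$ for $c_{2m+1}$.
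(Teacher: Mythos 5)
Your argument follows the paper's proof closely: insert the Taylor series \eqref{h:9} into \eqref{h:1}, integrate term by term against the Gaussian, factor out $e^{-\int_s^\infty q}$ via \eqref{h:10}, and establish \eqref{h:11c} by the same two-strand induction ($a_{2m}=O((-s)^{m/2})$, $c_{2m+1}=O((-s)^{(m-1)/2})$) that the paper asserts is ``easily checked.'' Two small remarks: your displayed Gaussian moment has a slip ($\int_0^\infty\la^{2n+2}e^{-\la^2/2}\,d\la=\sqrt\pi\,(2n+1)!/(2^{n+1/2}n!)$, not $2^{n+1}$ in the denominator), though your final prefactor $\sqrt\pi/(4w^{3/2})$ is correct; and the rigorous split of the $\la$-integral at $\sqrt{2w}$ tacitly requires that the radius of convergence of \eqref{h:3} exceed $1$ — this holds because $A(\z,s)$ is polynomial in $\z$, so $\Psi$ is in fact entire in $\z$, a point worth stating explicitly.
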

\begin{proof}
Replacing $\Phi_2$ in (\ref{h:1}) by its power series \eqref{h:9} gives
\begin{equation*}
\begin{aligned}
	h(s,w)&=\frac{1}{(2w)^{3/2}} \sum_{n=0}^\infty \frac{\Psi_{2n+1}^{(21)}(s)}{(2w)^{n}} \int_0^\infty \la^{2n+2}e^{-\frac{\la^2}{2}} \,d\la \\
	&=\frac{\sqrt{\pi}}{4w^{3/2}} \sum_{n=0}^\infty \frac{\Psi_{2n+1}^{(21)}(s) (2n+1)!}{4^nn!}w^{-n}\,.
\end{aligned}
\end{equation*}
Using the structure \eqref{h:10} of $\Psi_{2n+1}^{(21)}(s)$, we can thus write, as $w\to\infty$,
\begin{equation*}
h(s,w)=\frac{\sqrt{\pi}}{8w^{3/2}}e^{-\int_s^\infty q(x)\,dx} \sum_{n=0}^\infty Q_n(s,q(s),q'(s)) w^{-n}\,.
\end{equation*}
For the asymptotics \eqref{h:11c}, we use \cite{DZ}\footnote[2]{We would like to point out the following small detail. In \cite{BBD} the coefficient of $O(s^{-9})$ is given as $\frac{10219}{1024}$, whereas in \cite{DZ} and \cite{TW} it is $\frac{10657}{1024}$.}. Namely as $s\to -\infty$,
\begin{equation}\label{PIIasy} 
\begin{aligned}
	q(s)  &= \sqrt{\frac{-s}{2}}\left(1+\frac{1}{8s^3}-\frac{73}{128s^6}+\frac{10657}{1024s^9}+O(s^{-12})\right)\,,\\
 	q'(s) &= -\frac{1}{2^{3/2} \sqrt{-s}}\left(1-\frac{5}{8s^3}+\frac{803}{128s^6}-\frac{181169}{1024s^9}+O(s^{-12})\right)\,,
\end{aligned}
\end{equation}
from which it follows 
\begin{equation*}
\begin{aligned}
	v(s) &=& \frac{1}{\sqrt{-2s}}\bigg(1-\frac{(-s)^{-3/2}}{2\sqrt{2}}-\frac{5}{8s^3}-\frac{9(-s)^{-9/2}}{4\sqrt{2}}
	+\frac{803}{128s^6}-\frac{1323(-s)^{-15/2}}{32\sqrt{2}}\\
	&&\hspace{3cm} -\frac{181169}{1024s^9}+O\big((-s)^{-21/2}\big)\bigg).
\end{aligned}
\end{equation*}
From these we find that, as $s \to -\infty$,
\begin{equation*}
\begin{aligned}
	\Psi_{2n}(s)&=\begin{pmatrix} O((-s)^{n/2}) & 0 \\ 0 & O((-s)^{n/2})\end{pmatrix}e^{-\sg_3\int_s^\infty q(x)\,dx}\,, \\
	\Psi_{2n+1}(s)&=\begin{pmatrix} 0 & O((-s)^{(n-1)/2}) \\ O((-s)^{(n-1)/2}) & 0 \end{pmatrix}e^{-\sg_3\int_s^\infty q(x)\,dx}\,,
\end{aligned}
\end{equation*}
which is easily checked by induction using the recursion \eqref{h:4}.
\end{proof}

Let us now consider $h(s,w)$ as $w\rightarrow-\infty$.

\section{Airy formula for $h(s,w)$ and discussion}\label{minusinf}
In order to analyze $h(s,w)$ for large negative values of $w$ it is convenient to use a formula for $\Phi_2$ which was obtained in \cite{BLS}. Let us review the notation of \cite{BLS}.
As in \eqref{in:4}, let $\B_s$ be the integral operator acting on $L^2[0,\infty)$ with kernel
\begin{equation*}\label{pr:0}
	\B_s(x,y)=\Ai(x+y+s).
\end{equation*}
 Let $\A_s:=\B_s^2$ be the Airy operator acting on $L^2[0,\infty)$, which has kernel
\begin{equation*}\label{pr:1}
\begin{split}
	\A_s(x,y)
	&=\int_0^\infty \Ai(x+s+\xi) \Ai (y+s+\xi)\,d\xi \\
	&= \frac{\Ai(x+s)\Ai'(y+s)-\Ai'(x+s)\Ai(y+s)}{x-y} \,.
\end{split}
\end{equation*}
Define the functions $Q$ and $R$ as
\begin{equation}\label{pr:2}
	Q:= (\1 - \A_s)^{-1} \B_s \de_0\,, \qquad R:= (\1 - \A_s)^{-1} \A_s \de_0\,,
\end{equation}
where $\de_0$ is the Dirac delta function at zero. As in \cite{BLS} we use the convention that 
the Dirac delta function satisfies $\int_{[0, \infty)} \delta_0(x) f(x)dx =f(0)$ for functions $f$ which are right-continuous at $0$.
Introduce also the function
\begin{equation*}\label{pr:3}
	\Th(x) :=-\sin\left( \frac43 \zeta^3+(s+2x)\zeta\right)\,.
\end{equation*}
Proposition 2.1 of \cite{BLS} is
\begin{equation}\label{pr:5}
	\Phi_{2}(\z,s) = \Th(0)+\langle \Th, R + Q \rangle\,,
\end{equation}
where the functions $Q$ and $R$ are defined in (\ref{pr:2}), and $\langle \cdot, \cdot \rangle$ is the inner product on $L^2[0,\infty)$.
Notice that the function $R+Q$ can be written in terms of the resolvent kernel of the operator $\B_s$.  Indeed, let 
\begin{equation*}
\mathbf R_s:=(\1-\B_s)^{-1} -\1\,,
\end{equation*}
 be the resolvent of the operator $\B_s$.  It is an integral operator as well, and let us use $K_s$ to denote the function $\mathbf R_s \de_0$. We then have 
\begin{equation}\label{pr:6}
\begin{aligned}
R+Q&=(\1 - \B_s^2)^{-1}(\B_s+ \B_s^2)\de_0=(\1-\B_s)^{-1} \B_s \de_0=(\1-\B_s)^{-1}(\1-(\1- \B_s)) \de_0 \\
&=((\1-\B_s)^{-1}-\1) \de_0= K_s,
\end{aligned}
\end{equation}
and (\ref{pr:5}) becomes
\begin{equation}\label{pr:7}
	\Phi_{2}(\z,s) = \Th(0)+\langle \Th, K_s \rangle\,.
\end{equation}

Substituting \eqref{pr:7} into \eqref{in:11} gives
\begin{equation}\label{pr:8}
\begin{aligned}
	h(s,w) &=& -\int_0^{\infty}\z\sin\bigg(\frac{4}{3}\z^3+s\z\bigg)e^{-w\z^2}d\z+\int_0^{\infty}\z\left\langle\Th,K_s\right\rangle e^{-w\z^2} d\z \\
	&=&a(s,w)+\int_0^{\infty}a(s+2x,w)\Big((\1-\B_s)^{-1}\mathbf{T}_s\Ai\Big)(x)\,dx\,,
	\end{aligned}
\end{equation}
with
\begin{equation}\label{pr:9}
	a(y,w) = -\int_0^{\infty}\z\sin\bigg(\frac{4}{3}\z^3+y\z\bigg)e^{-w\z^2}d\z,\ \ y\in\mathbb{R},
\end{equation}
and the translation operator
\begin{equation*}\label{pr:10}
	\big(\mathbf{T}_sf\big)(x)=f(x+s).
\end{equation*}
When $w$ lies off of the non-negative real axis \eqref{pr:9} diverges. In order to obtain the analytical continuation of $a(\cdot,w)$ to the complex $w$-plane, set $\z=-iz$. Then
\begin{equation*}\label{pr:11}
	a(y,w) = \frac{1}{i}\int_0^{i\cdot\infty}z\sinh\bigg(\frac{4}{3}z^3-yz\bigg)e^{wz^2}dz = \frac{1}{2i}\int_{-i\cdot\infty}^{i\cdot\infty}ze^{\frac{4}{3}z^3+wz^2-yz}dz,
\end{equation*}
and by Cauchy's theorem we can deform the contour of integration such that
\begin{equation}\label{pr:12}
	a(y,w)=\frac{1}{2i}\int_{\mathcal{L}}ze^{\frac{4}{3}z^3+wz^2-yz}dz,
\end{equation}
where $\mathcal{L}$ is any contour that starts at the point infinity with argument $\textnormal{arg}\ z=-\frac{\pi}{3}$ (see Figure \ref{fig2}) and ends at the conjugated point.
\begin{figure}[tbh]
  \begin{center}
  \psfragscanon
  \psfrag{1}{\footnotesize{$\mathcal{L}$}}
  \includegraphics[width=8cm,height=6cm]{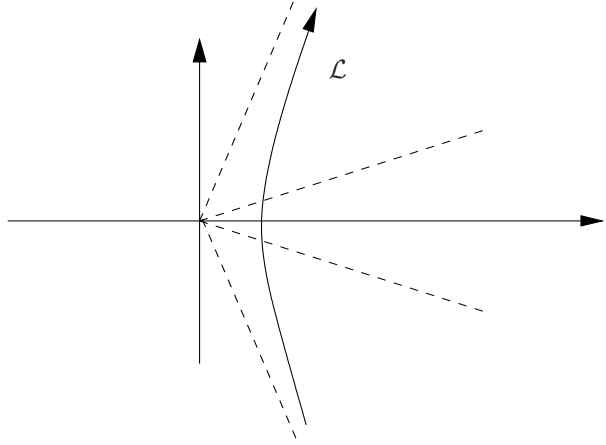}
  \end{center}
  \caption{The integration contour $\mathcal{L}$ chosen in \eqref{pr:12}. The dashed lines are the rays $\textnormal{arg}\ z=\pm\frac{\pi}{6}$ and $\textnormal{arg}\ z=\pm\frac{\pi}{3}$.}
  \label{fig2}
\end{figure}
We make the following observations. First \eqref{pr:12} converges absolutely and uniformly in any compact $w$ domain, hence \eqref{pr:12} determines the analytic continuation of $a(\cdot,w)$ to the entire real line and it allows us to consider \eqref{pr:8} now for all $w\in\mathbb{R}$. Secondly using the definition
\begin{equation*}\label{pr:13}
	\Ai(z) = \frac{1}{2\pi i}\int_{\mathcal{L}}e^{\frac{1}{3}t^3-tz}dt,\ \ z\in\mathbb{C},
\end{equation*}
we obtain (see \cite{BLS} for a similar computation),
\begin{equation*}\label{pr:14}
	a(y,w)=\frac{\pi}{2^{4/3}}e^{\frac{w^3}{24}+\frac{wy}{4}}\bigg(\Ai'\big(2^{-2/3}y+2^{-8/3}w^2\big)+\frac{w}{2^{4/3}}\Ai\big(2^{-2/3}y+2^{-8/3}w^2\big)\bigg),\ \ w,y\in\mathbb{R}.
\end{equation*}
In particular, notice that, for $y\gg -w^2/4$ we have, as $w\rightarrow\infty$,
\begin{eqnarray}
	a(y,-w)&=&-\frac{\sqrt{\pi}\sqrt{w}}{8}e^{-\frac{w^3}{24}-\frac{wy}{4}}\sum_{n,m=0}^{\infty}\bigg(\frac{4y}{w^2}\bigg)^n\bigg(-\frac{3}{2}\bigg)^m
	\Big(2^{-2/3}y+2^{-8/3}w^2\Big)^{-\frac{3m}{2}}\nonumber\\
	&&\times\Big[\binom{1/4}{n}d_m+\binom{-1/4}{n}c_m\Big]\exp\Bigg[-\frac{2}{3}\Big(2^{-2/3}y+2^{-8/3}w^2\Big)^{3/2}\Bigg]\,, \label{pr:15}
\end{eqnarray}
with (cf. \cite{BE})
\begin{equation*}
	c_0=d_0=1,\hspace{0.5cm} c_n = \frac{\Gamma(3n+\frac{1}{2})}{54^nn!\Gamma(n+\frac{1}{2})},\ \ d_n = -\frac{6n+1}{6n-1}c_n,\ \ n\geq 1.
\end{equation*}
We now go back to \eqref{pr:8} and investigate the kernel $K_s(x)$.  We consider it as a function of both $x$ and $s$, so let us write
\begin{equation*}\label{pr:16}
	K(x,s)\equiv K_s(x) =\Big((\1-\B_s)^{-1}\mathbf{T}_s\Ai\Big)(x),\ \ x\geq 0,
\end{equation*}
which solves the Fredholm integral equation
\begin{equation}\label{pr:17}
	\Ai(x+s)=K(x,s)-\int_0^{\infty}\Ai(x+y+s)K(y,s)dy.
\end{equation}
Recall at this point that the Hankel operator $\B_s:L^2[0,\infty)\rightarrow L^2[0,\infty)$ is a bounded operator and $\1-\B_s$ is invertible for any fixed $s\in\mathbb{R}$. Thus $K(x,\cdot)\in L^1[0,\infty)$.  In fact, as a consequence of \eqref{pr:17}, we have $K(x,\cdot)\in C^{\infty}[0,\infty)\cap L^1[0,\infty)$ for fixed $s\in\mathbb{R}$ and
\begin{equation*}\label{pr:18}
	K(x,s) = \Ai(x+s)\big(1+o(1)\big),\ x\rightarrow\infty,
\end{equation*}
uniformly on any compact subset of $\{s:-\infty<s<\infty\}$. For regularity of $K(s,\cdot)$ as a function in $s$, we get via H\"{o}lder's inequality and \eqref{pr:17},
\begin{equation*}\label{pr:19}
	K(s,\cdot)\in C^{\infty}(\mathbb{R}).
\end{equation*}
To obtain a statement on the large positive and negative $s$ behavior of $K(s,\cdot)$, we can view $\B_s$ equivalently as an operator $\B$ acting on $L^2[s,\infty)$ with kernel
\begin{equation*}
	\B(x,y) = \Ai(x+y).
\end{equation*}
A small norm argument for $\B$ then shows that $K(s,\cdot)$ is approaching zero exponentially fast as $s\rightarrow +\infty$ for any fixed $x\in\mathbb{R}$. On the other hand from \eqref{pr:17}, $K(s,\cdot)$ can grow at most power-like as $s\rightarrow-\infty$, again for any fixed $x\in\mathbb{R}$. For our purposes we are especially interested in the behavior of $K(0,s)$ as $s\rightarrow-\infty$. By \eqref{pr:6}, $K(0,s)$ is 
\begin{equation*}
	K(0,s) = R(0)+Q(0), 
	\end{equation*}
and thus (see \cite{TW, BLS}),
\begin{equation*}
	K(0,s) = \int_s^{\infty}q^2(t)\,dt+q(s).
\end{equation*}
Hence (cf. \cite{DZ},\cite{TW}),
\begin{equation}\label{pr:22}
	K(0,s) = \frac{s^2}{4}\bigg(1+\frac{2\sqrt{2}}{(-s)^{3/2}}-\frac{1}{2s^3}-\frac{\sqrt{2}}{4(-s)^{9/2}}+O\big(s^{-6}\big)\bigg),\hspace{0.5cm}s\rightarrow-\infty.
\end{equation}

\section{Large $w$ expansion of $P(w)$}\label{largeexp}
We have now gathered enough information to compute the asymptotics of $P(w)$ as $w\rightarrow\infty$. Let us write $P(w)$ as
\begin{equation*}\label{exp:1}
	P(w)=P_1(w)+P_2(w),
\end{equation*}
where
\begin{equation*}\label{exp:2}
\begin{aligned}
	P_1(w)&=\int_{-\infty}^\infty \, \fcal_1(s)\bigg[ \int_s^\infty \, h(u,w) a(u,-w)du\bigg] ds\,, \\
	P_2(w)&=\int_{-\infty}^\infty \, \fcal_1(s)\bigg[ \int_s^\infty \, h(u,w)\bigg\{\int_0^\infty \, a(u+2x,-w)K(x,u)dx\bigg\}\, du\bigg] ds.
\end{aligned}
\end{equation*}
We first focus on $P_1(w)$.  


\subsection{Expansion of $P_1(w)$}
Using \eqref{h:11a}, we have as $w\rightarrow\infty$,
\begin{equation*}
	P_1(w) = \frac{\pi^{3/2}}{2^{10/3}w^{3/2}}e^{-\frac{w^3}{24}}\int_{-\infty}^\infty \left[\fcal_1(s) \sum_{n=0}^{\infty}\frac{1}{w^n}\int_s^{\infty}e^{-\frac{wu}{4}-\int_u^{\infty}q(x)dx}G_n(u,w)du\right]\,ds\,,
\end{equation*}
where
\begin{equation}\label{Fn}
	G_n(s,w) = Q_n\big(s,q(s),q'(s)\big)\Big(\Ai'\big(2^{-2/3}s+2^{-8/3}w^2\big)-\frac{w}{2^{4/3}}\Ai\big(2^{-2/3}s+2^{-8/3}w^2\big)\Big).
\end{equation}
Let us write this as
\begin{equation}\label{hcal}
P_1(w)=\frac{\pi^{3/2}}{2^{10/3}w^{3/2}}e^{-\frac{w^3}{24}} \sum_{n=0}^\infty \frac{1}{w^n} \iint_{\Omega}\hcal_n(s,u)\,du\,ds\,,
\end{equation}
with
\begin{equation*}
	\hcal_n(s,u) =\fcal_1(s)e^{-\frac{wu}{4}-\int_u^{\infty}q(x)\,dx}G_n(u,w);\ \ \Omega=\big\{(s,u)\in\mathbb{R}^2:\ -\infty<s\leq u<\infty\big\}.
\end{equation*}
At this point recall the asymptotics from \cite{BBD}, as $s\rightarrow -\infty$,
\begin{eqnarray}\label{TW1as}
	\fcal_1(s) &=& \frac{\tau_1 e^{-\frac{|s|^3}{24}-\frac{|s|^{3/2}}{3\sqrt{2}}}}{|s|^{1/16}}\Big(1-\frac{|s|^{-3/2}}{24\sqrt{2}}+\frac{55|s|^{-3}}{2304}-\frac{10675|s|^{-9/2}}{165888\sqrt{2}}\nonumber\\
	&&\qquad\qquad+\frac{3970225|s|^{-6}}{31850496}+O\big(|s|^{-15/2}\big)\Big),\label{F1}\hspace{0.5cm}\tau_1=\frac{e^{\frac{1}{2}\z'(-1)}}{2^{11/48}},
\end{eqnarray} 
as well as for $u\rightarrow-\infty$,
\begin{equation*}
	e^{-\int_u^{\infty}q(x)dx} = \frac{1}{\sqrt{2}}e^{-\frac{\sqrt{2}}{3}|u|^{3/2}}\Big(1-\frac{|u|^{-3/2}}{12\sqrt{2}}+\frac{|u|^{-3}}{576}-\frac{2629|u|^{-9/2}}{20736\sqrt{2}}
	+\frac{10513|u|^{-6}}{1990656}+O\big(|u|^{-15/2}\big)\Big).
\end{equation*}
Since $G_n(s,w)$ has at most polynomial growth as $s\rightarrow-\infty$ and is exponentially decaying as $s\rightarrow\infty$, we see that $\hcal_n(s,u)$ is indeed integrable over $\Omega$ (here we used that $\fcal_1(s)e^{-\int_u^{\infty}q(x)dx}$ is bounded as $s,u\rightarrow+\infty$). Thus we can apply Fubini's theorem in \eqref{hcal} and obtain
\begin{equation}\label{h:1a}
	\iint_{\Omega}\hcal_n(s,u)du\,ds = \int_{-\infty}^{\infty}F(u)e^{-\frac{wu}{4}-\int_u^{\infty}q(x)dx}G_n(u,w)du,
\end{equation}
where 
\begin{equation*}\label{h:1b}
F(u) := \int_{-\infty}^u\fcal_1(s)\,ds.
\end{equation*}
For large negative $u$, we can integrate \eqref{F1} and obtain the following proposition.
\begin{prop}\label{prop1}
As $u\rightarrow -\infty$, 
\begin{eqnarray}\label{Fas}
F(u)&=&\frac{8\tau_1}{|u|^{33/16}}e^{-\frac{|u|^3}{24}-\frac{|u|^{3/2}}{3\sqrt{2}}}\bigg(1-\frac{97\sqrt{2}}{48|u|^{3/2}}-\frac{19337}{2304|u|^3}+\frac{24666605\sqrt{2}}{331776|u|^{9/2}}\nonumber\\
&&\hspace{6cm}+\frac{1358238769}{31850496|u|^6}+O\big(|u|^{-15/2}\big)\bigg),\label{exp:9}
\end{eqnarray}
and $\tau_1$ is given in \eqref{F1}.
\end{prop}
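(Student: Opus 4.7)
The plan is to establish \eqref{Fas} by a Laplace-type analysis at the upper endpoint $s=u$, exploiting the super-exponential decay of $\fcal_1(s)$ as $s\to-\infty$ given by \eqref{F1}. Set
\[
\phi(s) := \frac{|s|^3}{24} + \frac{|s|^{3/2}}{3\sqrt{2}}, \qquad g(s) := \fcal_1(s)\,e^{\phi(s)},
\]
so that $g$ has the algebraic expansion read off directly from \eqref{F1}, and
$\phi'(s) = -|s|^2/8 - |s|^{1/2}/(2\sqrt{2})$ is strictly negative of order $-|s|^2$ as $s\to -\infty$. Because $-\phi'$ is bounded away from zero, the factor $e^{-\phi(s)}$ can be differentiated freely, and the integral $F(u)$ is dominated by values of $s$ near the upper endpoint.

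The basic step is integration by parts via $e^{-\phi(s)} = -\phi'(s)^{-1}\tfrac{d}{ds}e^{-\phi(s)}$:
\[
F(u) = -\frac{g(u)}{\phi'(u)}\,e^{-\phi(u)} + \int_{-\infty}^u T[g](s)\,e^{-\phi(s)}\,ds, \qquad T[g](s) := \frac{d}{ds}\frac{g(s)}{\phi'(s)},
\]
the boundary contribution at $-\infty$ vanishing because $g/\phi'$ is of algebraic growth while $e^{-\phi}$ decays super-exponentially. Iterating $N$ times produces
\[
F(u) = -\frac{1}{\phi'(u)}\sum_{k=0}^{N-1}T^k[g](u)\,e^{-\phi(u)} + \int_{-\infty}^u T^N[g](s)\,e^{-\phi(s)}\,ds,
\]
and since each application of $T$ gains a factor $1/\phi'(s) = O(|s|^{-2})$, the $N$-th remainder is smaller than any finite truncation of the sum. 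This yields a complete asymptotic series, and what remains is to identify its first five coefficients explicitly.

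To do so, I would expand
\[
\frac{1}{-\phi'(s)} = \frac{8}{|s|^2}\bigl(1+2\sqrt{2}\,|s|^{-3/2}\bigr)^{-1} = \frac{8}{|s|^2}\sum_{j\geq 0}(-2\sqrt{2})^j|s|^{-3j/2},
\]
multiply by the series for $g(s)$, and iterate $T$ a few times, using $\tfrac{d}{ds}|s|^\alpha = -\alpha|s|^{\alpha-1}$ for $s<0$. The leading term is immediate, $-g(u)/\phi'(u)\sim 8\tau_1|u|^{-33/16}$, giving the overall prefactor in \eqref{Fas}. The $|u|^{-3/2}$ coefficient arises solely from the $k=0$ iteration and equals $-(\tfrac{1}{24\sqrt{2}}+2\sqrt{2}) = -\tfrac{97\sqrt{2}}{48}$, as required. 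The $|u|^{-3}$ coefficient combines the second-order $k=0$ contribution $\tfrac{55}{2304}+\tfrac{1}{12}+8 = \tfrac{18679}{2304}$ with the leading $k=1$ contribution $-\tfrac{33}{2}=-\tfrac{38016}{2304}$, summing to $-\tfrac{19337}{2304}$ as in \eqref{Fas}. The $|u|^{-9/2}$ and $|u|^{-6}$ coefficients similarly aggregate contributions from $k=0,1,2$ and $k=0,1,2,3$ respectively.

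The only real obstacle is the combinatorial bookkeeping: the expansions of $g$ and of $1/\phi'$ are both in powers of $|s|^{-3/2}$, and iterated differentiation via $T$ shifts these by half-integer amounts, so many disparate terms must be combined to produce each coefficient in \eqref{Fas}. No new analytic input is required beyond the uniform super-exponential decay of $e^{-\phi(s)}$, which controls both the boundary contribution at $-\infty$ and the remainder after any number of integration-by-parts iterations.
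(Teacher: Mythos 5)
Your argument is correct and is essentially the paper's proof: both are Watson's-lemma expansions at the endpoint $s=u$ via iterated integration by parts against the super-exponentially small factor $e^{-\phi(s)}$ (the paper first substitutes $t=|s|^{3}/24+|s|^{3/2}/(3\sqrt{2})$ and integrates by parts in $t$, whereas you integrate by parts directly in $s$, which is a mechanically equivalent route). Your coefficient checks at orders $|u|^{-3/2}$ and $|u|^{-3}$ agree with \eqref{Fas}, and the remaining coefficients indeed follow by the same bookkeeping, so the only (cosmetic) imprecision is the remark that each application of $T$ gains $1/\phi'=O(|s|^{-2})$ --- the differentiation contributes an extra $|s|^{-1}$ so the true gain per iterate is $O(|s|^{-3})$, which is what your computation of $T[g]/g\sim -\tfrac{33}{2}|s|^{-3}$ actually uses.
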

\begin{proof}
We start with \eqref{F1} and obtain for $|u|$ sufficiently large after a change of variables
\begin{eqnarray}
	F(u) &=& \tau_1\int_{|u|}^{\infty}\frac{e^{-\frac{x^3}{24}-\frac{x^{3/2}}{3\sqrt{2}}}}{x^{1/16}}\Big(1-\frac{x^{-3/2}}{24\sqrt{2}}+\frac{55x^{-3}}{2304}-\frac{10675x^{-9/2}}{165888\sqrt{2}}\nonumber\\
	&&\hspace{6cm}+\frac{3970225x^{-6}}{31850496}+O\big(x^{-15/2}\big)\Big)dx.\label{p:1}
\end{eqnarray}
Put $t=\frac{x^3}{24}+\frac{x^{3/2}}{3\sqrt{2}},$ or equivalently,
\begin{equation}\label{subs}
	x = 8^{1/3}\big(-1+\sqrt{1+3t}\big)^{2/3},
\end{equation}
so that as $t\rightarrow\infty$,
\begin{equation*}	
	x=24^{1/3}t^{1/3}\bigg(1-\frac{2t^{-1/2}}{3\sqrt{3}}+\frac{2}{27t}+\frac{5t^{-3/2}}{243\sqrt{3}}-\frac{16}{2187t^2}+O\big(t^{-5/2}\big)\bigg),
\end{equation*}
and
\begin{equation*}
	dx = \frac{24^{1/3}}{3t^{2/3}}\bigg(1+\frac{t^{-1/2}}{3\sqrt{3}}-\frac{4}{27t}-\frac{35t^{-3/2}}{486\sqrt{3}}+\frac{80}{2187t^2}+O\big(t^{-5/2}\big)\bigg)dt.
\end{equation*}
Substituting \eqref{subs} into \eqref{p:1}, we obtain
\begin{eqnarray*}
	F(u) &=& \frac{\tau_1 24^{5/16}}{3}\int_{\frac{|u|^3}{24}+\frac{|u|^{3/2}}{3\sqrt{2}}}^{\infty}e^{-t}t^{-11/16}\Big(1+\frac{35t^{-1/2}}{96\sqrt{3}}-\frac{8129}{55296t}-\frac{1278127t^{-3/2}}{15925248\sqrt{3}}\\
	&&\hspace{7cm}+\frac{661165345}{18345885696t^2}+O\big(t^{-5/2}\big)\Big)\,dt\,,
\end{eqnarray*}
which can be readily integrated by parts to produce \eqref{exp:9}.
\end{proof}
Notice in particular that by \eqref{TW1as} and \eqref{Fas}, as $u\rightarrow-\infty$,
\begin{equation*}
	\frac{\fcal_1(u)}{F(u)}=\frac{u^2}{8}\bigg(1+\frac{2\sqrt{2}}{|u|^{3/2}}+\frac{33}{2|u|^3}-\frac{97\sqrt{2}}{4|u|^{9/2}}
	-\frac{4791}{16|u|^6}+O\big(|u|^{-15/2}\big)\bigg).
\end{equation*}

We split the integral in \eqref{h:1a} in the following way. For some $\alpha>0$, which will be specified more precisely in the following, let
\begin{equation}\label{Hsplit}
	\iint_{\Omega}\hcal_n(s,u)\,du\, ds = I_{1,n}+I_{2,n},
	\end{equation}
where
	\begin{equation*}
	\begin{aligned}
	I_{1,n}&=\int_{-\infty}^{-w^{\alpha}}F(u)e^{-\frac{wu}{4}-\int_u^{\infty}q(x)dx}G_n(u,w)du\,, \\
	I_{2,n}&=\int_{-w^{\alpha}}^{\infty}F(u)e^{-\frac{wu}{4}-\int_u^{\infty}q(x)dx}G_n(u,w)du\,.
	\end{aligned}
\end{equation*}
For $I_{1,n}$, choose $w>0$ sufficiently large and replace $F(u)$ by its asymptotics \eqref{exp:9}:
\begin{eqnarray}\label{I1:1}
	I_{1,n} &=& 8\tau_1\int_{-\infty}^{-w^{\alpha}}e^{\frac{u^3}{24}-\frac{(-u)^{3/2}}{3\sqrt{2}}-\frac{wu}{4}-\int_u^{\infty}q(x)dx}(-u)^{-33/16}G_n(u,w)\Big(1+O\big((-u)^{-3/2}\big)\Big)du\nonumber\\
	&=& 8\tau_1\int_{w^{\alpha}}^{\infty}e^{-\left(\frac{t^3}{24}+\frac{t^{3/2}}{3\sqrt{2}}-\frac{wt}{4}+\int_{-t}^{\infty}q(x)dx\right)}t^{-33/16}G_n(-t,w)\Big(1+O\big(t^{-3/2}\big)\Big)dt.
\end{eqnarray}
Since (see \cite{BBD})
\begin{equation*}
	\int_{-t}^{\infty}q(x)dx = \frac{\sqrt{2}}{3}t^{3/2}+\frac{1}{2}\log2 - \int_{-\infty}^{-t}\bigg(q(x)-\sqrt{\frac{-x}{2}}\bigg)dx\,,
\end{equation*}
we can use \eqref{PIIasy} and obtain
\begin{equation*}
	\int_{-t}^{\infty}q(x)dx = \frac{\sqrt{2}}{3}t^{3/2}+\frac{1}{2}\log2+\frac{\sqrt{2}}{24}t^{-3/2}+O\big(t^{-9/2}\big),\ \ t\rightarrow\infty.
\end{equation*}
To evaluate \eqref{I1:1} asymptotically, we use these asymptotics to obtain
\begin{equation*}
	I_{1,n} = 4\sqrt{2}\tau_1\int_{w^{\alpha}}^{\infty}e^{-\big(\frac{t^3}{24}+\frac{t^{3/2}}{\sqrt{2}}-\frac{wt}{4}\big)}t^{-33/16}G_n(-t,w)\left(1+O\big(t^{-3/2}\big)\right)dt\,.
\end{equation*}
In order to ensure convergence of the integral, we need to impose 
\begin{equation*}
	1+\frac{24}{\sqrt{2}}t^{-3/2}-\frac{6w}{t^2}>0,\ \ \textnormal{as}\ \ w\rightarrow\infty,
\end{equation*}
which can be guaranteed for any $\alpha>\frac{1}{2}$. Let us use the change of variables
\begin{equation*}
	u = \frac{t^3}{24}+\frac{t^{3/2}}{\sqrt{2}}-\frac{wt}{4},\qquad t>0.
\end{equation*}
As $u\rightarrow\infty$
\begin{equation}\label{chtu}
	t = 24^{1/3}u^{1/3}\bigg(1-\frac{\sqrt{12}}{3}u^{-1/2}+\frac{24^{1/3}w}{12}u^{-2/3}+O\big(u^{-1}\big)\bigg),
\end{equation}
and we notice that the only critical point of the function $u=u(t)$ as $w\rightarrow\infty$ is given by
\begin{equation*}
	t_0= \sqrt{2w}\Big(1+O\big(w^{-3/4}\big)\Big),
\end{equation*}
which by the choice $\al>\frac{1}{2}$ lies not in the domain of integration. Moreover,
\begin{equation*}
	dt = \frac{24^{1/3}}{3}u^{-2/3}\bigg(1-\frac{\sqrt{12}}{6}u^{-1/2}-\frac{24^{1/3}w}{12}u^{-2/3}+O\big(u^{-1}\big)\bigg)du\,,
\end{equation*}
and we conclude
\begin{equation}\label{I1n}
	I_{1,n} = \frac{4}{3}\sqrt{2}\tau_124^{-17/48}\int_{\frac{w^{3\alpha}}{24}+\frac{w^{3\alpha/2}}{\sqrt{2}}-\frac{w^{\alpha+1}}{4}}^{\infty}e^{-u}u^{-65/48}G_n\big(-t(u),w\big)\Big(1+O\big(u^{-1/2}\big)\Big)du.
\end{equation}
Since $G_n(-t(u),w)$ in the integral above grows at most like a polynomial as $w\rightarrow\infty$, we have the  estimate
\begin{equation}\label{In1esti}
	|I_{1,n}| \leq ce^{-\frac{w^{3\alpha}}{24}-\frac{w^{3\alpha/2}}{\sqrt{2}}+\frac{w^{\alpha+1}}{4}}|R_n\big(w^{\alpha}\big)|, \quad \textrm{as} \ w\rightarrow\infty,
\end{equation}
for some polynomial $R_n$ and $c\in \R$.

\smallskip

For $I_{2,n}$,
\begin{equation*}
	I_{2,n} = \int_{-w^{\alpha}}^{\infty}F(u)e^{-\frac{wu}{4}-\int_u^{\infty}q(x)dx}G_n(u,w)du,
\end{equation*}
we choose more specific bounds on $\alpha$, namely $1<\alpha<2$. Then, as $w\rightarrow\infty$,
\begin{equation*}
	2^{-2/3}u +2^{-8/3}w^2 \rightarrow+\infty,
\end{equation*}
and we can use the expansion \eqref{pr:15} for $a(u, -w)$.  This gives
\begin{equation*}
	G_n(u,w) = -2^{-2/3}\sqrt{\frac{w}{\pi}}Q_n\big(u,q(u),q'(u)\big)\Pi(u,w)e^{-\frac{2}{3}(2^{-2/3}u+2^{-8/3}w^2)^{3/2}},\ \ w\rightarrow\infty,
\end{equation*}
where $\Pi(u,w)$ can be read from \eqref{pr:15}:
\begin{eqnarray}
	\Pi(u,w) &=& \frac{1}{2}\sum_{n,m=0}^{\infty}\bigg(\frac{4u}{w^2}\bigg)^n\bigg(-\frac{3}{2}\bigg)^m\Big(2^{-2/3}u+2^{-8/3}w^2\Big)^{-\frac{3m}{2}}\bigg[\binom{1/4}{n}d_m+\binom{-1/4}{n}c_m\bigg]\nonumber\\
	&=&1+\frac{1}{3w^3}+\frac{u^2}{2w^4}+O\big(w^{-6}\big),\hspace{0.5cm} w\rightarrow+\infty.\label{Pi}
\end{eqnarray}
In general, the coefficient of $w^{-k}$ in the latter series is a polynomial in $u$ of degree at most $k/2$. Let us now write
\begin{equation*}
	I_{2,n}=-2^{-2/3}\sqrt{\frac{w}{\pi}}\,L_n(w),
\end{equation*}
where
\begin{equation*}
	L_n(w)=\int_{-w^{\alpha}}^{\infty}e^{-wH(u)}Q_n\big(u,q(u),q'(u)\big)\Pi(u,w)\,du,
\end{equation*}
and 
\begin{equation*}
	H(u)\equiv H(u|w) = -\frac{1}{w}\ln F(u)+\frac{u}{4}+\frac{1}{w}\int_u^{\infty}q(x)dx+\frac{w^2}{24}\bigg(1+\frac{4u}{w^2}\bigg)^{3/2},\ u\in\mathbb{R}.
\end{equation*}
The derivative of the function $H(u)$ is
\begin{equation*}
	H'(u) = -\frac{\fcal_1(u)}{wF(u)}+\frac{1}{4}-\frac{q(u)}{w}+\frac{1}{4}\sqrt{1+\frac{4u}{w^2}}.
\end{equation*}
As $w\rightarrow\infty$, the only zero $u_0$ of the function $H'(u)$ will lie in a neighborhood of $u=-\infty$, so we can use Proposition \ref{prop1}, \eqref{F1} and \eqref{PIIasy} to determine it. The solution is 
\begin{equation}\label{cp}
	u_0 = -2\sqrt{w}\Big(1-\frac{3}{2w^{3/4}}-\frac{65}{32w^{3/2}}-\frac{3}{8w^{9/4}}+O\big(w^{-3}\big)\Big),\ \ w\rightarrow\infty\,,
\end{equation}
and the integral $L_n(w)$ can be evaluated as $w\rightarrow\infty$ by Laplace's method. For the convenience of the reader, we have included a short review of Laplace's method in Appendix \ref{app2}. Notice that as $w\rightarrow\infty$,
\begin{eqnarray*}
	e^{-\frac{wu_0}{4}-\frac{w^3}{24}\big(1+\frac{4u_0}{w^2}\big)^{3/2}} &=& e^{-\frac{w^3}{24}+w^{3/2}-\frac{3}{2}w^{3/4}-\frac{97}{32}}\bigg(1+\frac{21}{8w^{3/4}}+O\big(w^{-3/2}\big)\bigg),\\
	e^{-\int_{u_0}^{\infty}q(x)dx}&=&\frac{e^3}{\sqrt{2}}e^{-\frac{4}{3}w^{3/4}}\bigg(1+\frac{35}{12w^{3/4}}+O\big(w^{-3/2}\big)\bigg),\\
	F(u_0) &=&\frac{2^{15/16}\tau_1}{w^{33/32}}e^{-\frac{w^{3/2}}{3}+\frac{5}{6}w^{3/4}+\frac{41}{32}}\bigg(1-\frac{25}{24w^{3/4}}+O\big(w^{-3/2}\big)\bigg),
\end{eqnarray*}
which implies 
\begin{equation}\label{L1}
	e^{-wH(u_0|w)} = \frac{2^{5/24}e^{\frac{1}{2}\zeta'(-1)}e^{5/4}}{w^{33/32}}e^{-\frac{w^3}{24}+\frac{2}{3}w^{3/2}-2w^{3/4}}\bigg(1+\frac{9}{2}w^{-3/4}+O\big(w^{-3/2}\big)\bigg).
\end{equation}
Also,
\begin{equation}\label{Hdprime}
	\big(wH''(u_0)\big)^{-1/2}=\frac{\sqrt{2}}{w^{1/4}}\bigg(1+\frac{3}{8w^{3/4}}+O\big(w^{-3/2}\big)\bigg),\quad \textrm{as} \ w\rightarrow\infty.
\end{equation}
Combining \eqref{L1} and \eqref{Hdprime},  Laplace's method gives
\begin{eqnarray*}
	L_n(w) &=& \frac{2^{29/54}\sqrt{\pi}e^{\frac{1}{2}\zeta'(-1)}e^{5/4}}{w^{41/32}} e^{-\frac{w^3}{24}+\frac{2}{3}w^{3/2}-2w^{3/4}}Q_n\big(u_0,q(u_0),q'(u_0)\big)\\
	&&\times\Pi(u_0,w)\bigg(1+\frac{39}{8w^{3/4}}+O\big(w^{-3/2}\big)\bigg).
\end{eqnarray*}
It follows that
\begin{eqnarray*}
	I_{2,n}&=&-\frac{2^{13/24}e^{\frac{1}{2}\zeta'(-1)}e^{5/4}}{w^{25/32}} e^{-\frac{w^3}{24}+\frac{2}{3}w^{3/2}-2w^{3/4}}Q_n\big(u_0,q(u_0),q'(u_0)\big)\\
	&&\times\Pi(u_0,w)\bigg(1+\frac{39}{8w^{3/4}}+O\big(w^{-3/2}\big)\bigg),\hspace{0.5cm}w\rightarrow\infty.
\end{eqnarray*}
From \eqref{Pi} and the property that the coefficient of $w^{-k}$ in the asymptotic series for $\Pi(u,w)$ is a polynomial in $u$ of degree at most $k/2$, we find
\begin{equation}\label{pesti}
	\Pi(u_0,w) = 1+O\big(w^{-3}\big),\ w\rightarrow\infty.
\end{equation}
Plugging into  \eqref{hcal} and \eqref{Hsplit}, we obtain   
\begin{eqnarray}
	P_1(w) &=& \frac{\pi^{3/2}}{2^{10/3}w^{3/2}}e^{-\frac{w^3}{24}}\sum_{n=0}^{\infty}\frac{1}{w^n}\Big(I_{1,n}+I_{2,n}\Big)\nonumber\\
	&=& - \frac{\pi^{3/2}e^{\frac{1}{2}\zeta'(-1)}e^{5/4}}{w^{73/32}2^{67/24}}e^{-\frac{w^3}{12}+\frac{2}{3}w^{3/2}-2w^{3/4}}\bigg(1+\frac{39}{8w^{3/4}}+O\big(w^{-3/2}\big)\bigg) \nonumber\\
	&& \qquad \times  \sum_{n=0}^{\infty}Q_n\big(u_0,q(u_0),q'(u_0)\big)w^{-n},
	\label{L2}
\end{eqnarray}
where we have used the estimate \eqref{In1esti} which shows that $I_{1,n}$ does not contribute to the leading order asymptotics when $\al>1$. Next by Proposition \ref{prop0}, as $w\rightarrow\infty$,
\begin{equation*}
	\sum_{n=0}^{\infty}Q_n\big(u_0,q(u_0),q'(u_0)\big)w^{-n} = Q_0\big(u_0,q(u_0),q'(u_0)\big)+\frac{1}{w}Q_1\big(u_0,q(u_0),q'(u_0)\big)+O\big(w^{-7/4}\big),
\end{equation*}
and from \eqref{Q01},
\begin{equation*}
	Q_0\big(u_0,q(u_0),q'(u_0)\big) = -v(u_0) = -\frac{1}{2w^{1/4}}\bigg(1+\frac{5}{8w^{3/4}}+O\big(w^{-3/2}\big)\bigg),\hspace{0.5cm}\textrm{as} \ w\rightarrow\infty,
\end{equation*}
and
\begin{equation*}
	Q_1\big(u_0,q(u_0),q'(u_0)\big) = -\frac{3}{2}+O\big(w^{-3/4}\big).
\end{equation*}
Thus
\begin{equation}\label{PestiII}
		\sum_{n=0}^{\infty}Q_n\big(u_0,q(u_0),q'(u_0)\big)w^{-n} =-\frac{1}{2w^{1/4}}\bigg(1+\frac{29}{8w^{3/4}}+O\big(w^{-3/2}\big)\bigg),\hspace{0.5cm}w\rightarrow\infty.
\end{equation}
Combining this with \eqref{L2} we find, as $w\rightarrow\infty$,
\begin{equation}\label{P1fin}
	P_1(w) = \kappa e^{-\frac{w^3}{12}+\frac{2}{3}w^{3/2}-2w^{3/4}}w^{-81/32}\bigg(1+\frac{17}{2w^{3/4}}+O\big(w^{-3/2}\big)\bigg),
\end{equation}
with
\begin{equation}\label{P1fincon}
	\kappa=2^{-91/24}\pi^{3/2}e^{\frac{1}{2}\zeta'(-1)}e^{5/4}.
\end{equation}
We now turn our attention to $P_2$.

 
\subsection{Expansion of $P_2(w)$}
Following the philosophy of the previous subsection, write
\begin{equation*}
	P_2(w) = \frac{\pi^{3/2}}{2^{10/3}w^{3/2}}e^{-\frac{w^3}{24}}\sum_{n=0}^{\infty}\frac{1}{w^n}\iiint_{\widehat{\Omega}}\widehat{\hcal}_n(s,u,x)dx\,du\,ds\,,
\end{equation*}
with
\begin{equation*}
	\widehat{\hcal}_n(s,u,x) = \fcal_1(s)e^{-\frac{w(u+2x)}{4}-\int_u^{\infty}q(t)dt}\widetilde{G}_n(u,w,x)\,,
\end{equation*}
\begin{equation*}
	\widetilde{G}_n(u,w,x) = Q_n\big(u,q(u),q'(u)\big)\left(\Ai'\bigg(\frac{u+2x}{2^{2/3}}+\frac{w^2}{2^{8/3}}\bigg)-\frac{w^2}{2^{4/3}}\Ai\bigg(\frac{u+2x}{2^{2/3}}+\frac{w^2}{2^{8/3}}\bigg)\right)K(x,u)\,,
\end{equation*}
and the domain of integration
\begin{equation*}
	\widehat{\Omega} = \big\{(s,u,x)\in\mathbb{R}^3:\ -\infty<s\leq u<\infty,\ x\geq 0\big\}.
\end{equation*}
Recalling the behavior of $K(x,u)$ as we approach the boundary of $\widehat{\Omega}$, we can again apply Fubini's theorem to obtain
\begin{equation*}\label{P2:1}
	\iiint_{\widehat{\Omega}}\widehat{\hcal}_n(s,u,x)dx\,du\,ds = \int_{-\infty}^{\infty}F(u)e^{-\frac{wu}{4}-\int_u^{\infty}q(t)dt}\widehat{G}_n(u,w)du\,,
\end{equation*}
where (compare \eqref{Fn})
\begin{equation}\label{P2:2}
\begin{aligned}
	\widehat{G}_n(u,w) =& \int_0^{\infty}\widetilde{G}_n(u,w,x)K(x,u)dx \\
	 =&Q_n\big(u,q(u),q'(u)\big) \\
	& \times\int_0^{\infty}e^{-\frac{wx}{2}}\left[\Ai'\left(\frac{u+2x}{2^{2/3}}+\frac{w^2}{2^{8/3}}\right)-\frac{w^2}{2^{4/3}}\Ai\left(\frac{u+2x}{2^{2/3}}
	+\frac{w^2}{2^{8/3}}\right)\right]K(x,u)dx.
	\end{aligned}
\end{equation}
Introducing $1<\al<2$, we again split the integral as
\begin{eqnarray*}
	&&\iiint_{\widehat{\Omega}}\widehat{\hcal}_n(s,u,x)dx\,du\,ds = \hat{I}_{1,n}+\hat{I}_{2,n}\\
	&\equiv& \int_{-\infty}^{-w^{\al}}F(u)e^{-\frac{wu}{4}-\int_u^{\infty}q(t)dt}\widehat{G}_n(u,w)du+\int_{-w^{\al}}^{\infty}F(u)e^{-\frac{wu}{4}-\int_u^{\infty}q(t)dt}\widehat{G}_n(u,w)du.
\end{eqnarray*}
For $\hat{I}_{1,n}$ we follow the same steps as for $I_{1,n}$ and conclude (see \eqref{I1n})
\begin{equation*}
	\hat{I}_{1,n} = \frac{4}{3}\sqrt{2}\tau_124^{-17/48}\int_{\frac{w^{3\al}}{24}+\frac{w^{3\al/2}}{\sqrt{2}}-\frac{w^{\al+1}}{4}}^{\infty}e^{-u}u^{-65/48}\widehat{G}_n\big(-t(u),w)\big)\Big(1+O\big(u^{-1/2}\big)\Big)du\,,
\end{equation*}
with $t=t(u)$ given in \eqref{chtu}. Also here, $\widehat{G}_n(u,w)$ grows at most like a polynomial as $u\rightarrow-\infty$, hence we have the estimate
\begin{equation*}\label{P2:3}
	|\hat{I}_{1,n}|\leq \hat{c}e^{-\frac{w^{3\al}}{24}-\frac{w^{3\al/2}}{\sqrt{2}}+\frac{w^{\al+1}}{4}}|\widehat{R}_n(w^{\al})|,\hspace{0.5cm} \textrm{as} \ w\rightarrow\infty,
\end{equation*}
for some polynomial $\widehat{R}_n$. 

For $\hat{I}_{2,n}$ we can replace the Airy functions in \eqref{P2:2} with their large argument asymptotics, giving
\begin{equation*}
	\widehat{G}_n(u,w) = -2^{-2/3}\sqrt{\frac{w}{\pi}}\int_0^{\infty}e^{-wJ(u,x|w)}\Pi(u+2x,w)K(x,u)dx\,,
\end{equation*}
with
\begin{equation*}
	J(u,x)\equiv J(u,x|w) = \frac{x}{2}+\frac{w^2}{24}\bigg(1+\frac{4(u+2x)}{w^2}\bigg)^{3/2}.
\end{equation*}
Plugging this into the integral $\hat{I}_{2,n}$ gives
\begin{equation*}
	\hat{I}_{2,n} = -2^{-2/3}\sqrt{\frac{w}{\pi}}\iint_De^{-w(\hat{H}(u|w)+J(u,x|w))}Q_n\big(u,q(u),q'(u)\big)\Pi(u+2x,w)K(x,u)dx\,du\,,
\end{equation*}
where the domain of integration $D$ equals
\begin{equation*}
	D=\big\{(u,x)\in\mathbb{R}^2:\ -w^{\al}<u<\infty,\ x\geq 0\big\},
\end{equation*}
where $\al \in(1,2)$ is fixed, 
and $\hat{H}=\hat{H}(u|w)$ is given by
\begin{equation*}
	\hat{H}(u|w) = -\frac{1}{w}\ln F(u)+\frac{u}{4}+\frac{1}{2}\int_u^{\infty}q(x)\,dx.
\end{equation*}
Since
\begin{eqnarray*}
	\frac{\partial}{\partial u}\big(\hat{H}(u|w)+J(u,x|w)\big) &=& -\frac{\fcal_1(u)}{wF(u)}+\frac{1}{4}-\frac{q(u)}{w}+\frac{1}{4}\sqrt{1+\frac{4(u+2x)}{w^2}}\,,\\
	\frac{\partial}{\partial x}\big(\hat{H}(u|w)+J(u,x|w)\big) &=& \frac{1}{2}+\frac{1}{2}\sqrt{1+\frac{4(u+2x)}{w^2}}>0\,,
\end{eqnarray*}
we see that $\hat{H}(u|w)+J(u,x|w)$ has no critical point in the interior of $D$. However its partial derivative with respect to $u$ vanishes in a neigbhorhood of $(-\infty,x)$. More precisely, it vanishes for $u=u_0(x)$, where
\begin{equation*}\label{u0x}
	u_0(x) = -2\sqrt{w}\bigg(1-\frac{3}{2w^{3/4}}-\frac{65}{32w^{3/2}}+\frac{x}{w^2}-\frac{3}{8w^{9/4}}+\frac{3x}{4w^{11/4}}+O\big(w^{-3}\big)\bigg),\hspace{0.2cm}\textrm{as} \ w\rightarrow\infty.
\end{equation*}
Notice that $u_0(0)=u_0$ as given in \eqref{cp}. Laplace's method now indicates that we need to expand the integral $\hat{I}_{2,n}$ in a neighborhood of the point $(u_0(0),0)$. First,
\begin{eqnarray*}
	\hat{H}(u|w)+J(u,x|w) &=& H(u_0|w)+J_x(u_0,0|w)x\\
	&&+\frac{1}{2}(u-u_0,x)M(u_0,0|w)\binom{u-u_0}{x}+O\big((u-u_0)^3x^3\big),
\end{eqnarray*}
which is valid in a neighborhood of $(u_0(0),0)$, with $M(u,x|w)$ denoting the Hessian matrix of $\hat{H}(u|w)+J(u,x|w)$. Since the mixed partial $J_{xu}$ differs from $J_{xx}$ by a factor of $1/2$, we have
\begin{equation*}
	M(u_0,0|w) = \begin{pmatrix}
	H''(u_0) & \frac{1}{2}J_{xx}(u_0,0)\\
	\frac{1}{2}J_{xx}(u_0,0) & J_{xx}(u_0,0)\\
	\end{pmatrix},\hspace{0.5cm} J_{xx}(u_0,0) = \frac{2}{w^2}\bigg(1+\frac{4u_0}{w^2}\bigg)^{-1/2},
\end{equation*}
and therefore,
\begin{eqnarray*}
	M(u_0,0|w)&=& \frac{1}{2\sqrt{w}}\bigg[\begin{pmatrix}
	1 & 0\\
	0 & 0\\
	\end{pmatrix}-\frac{3}{4w^{3/4}}\begin{pmatrix}
	1 & 0\\
	0 & 0\\
	\end{pmatrix} +\frac{1}{2w^{3/2}}\begin{pmatrix}
	-7 & 4\\
	4 & 8\\
	\end{pmatrix}-\frac{9}{64w^{9/4}}\begin{pmatrix}
	1 & 0\\
	0 & 0\\
	\end{pmatrix}\\
	&&+O\big(w^{-5/2}\big)\bigg],\hspace{0.5cm}\textrm{as } \ w\rightarrow\infty.
\end{eqnarray*}
Here we have used that (compare \eqref{Hdprime})
\begin{equation*}
	H''(u_0) = \frac{1}{2\sqrt{w}}\bigg(1-\frac{3}{4w^{3/4}}-\frac{7}{2w^{3/2}}-\frac{9}{64w^{9/4}}+O\big(w^{-5/2}\big)\bigg),\hspace{0.5cm}\textrm{as } \ w\rightarrow\infty.
\end{equation*}
Moving on with Laplace's method, we have
\begin{eqnarray*}
	\hat{I}_{2,n}&=&-2^{-2/3}\sqrt{\frac{w}{\pi}}\iint_De^{-w(\hat{H}(u,w)+J(u,x|w))}Q_n\big(u,q(u),q'(u)\big)\Pi(u+2x,w)K(x,u)dxdx\\
	&=&-2^{-2/3}\sqrt{\frac{w}{\pi}}e^{-wH(u_0|w)}Q_n\big(u_0,q(u_0),q'(u_0)\big)\Pi(u_0,w)K(0,u_0)\int_0^{\infty}e^{-wJ_x(u_0,0|w)x}dx\\
	&&\times\int_{-\infty}^{\infty}e^{-\frac{\sqrt{w}}{4}\left(1-\frac{3}{4w^{3/4}}\right)u^2}du\Big(1+O\big(w^{-1}\big)\Big).
\end{eqnarray*}
The integrals in the above expression can be expanded for large $w$ as
\begin{equation*}
	\int_0^{\infty}e^{-wJ_x(u_0,0|w)x}dx = \frac{1}{w}\bigg(1-\frac{2}{w^{3/2}}+O\big(w^{-9/4}\big)\bigg),
\end{equation*}
and
\begin{equation*}
	\int_{-\infty}^{\infty}e^{-\frac{\sqrt{w}}{4}\left(1-\frac{3}{4w^{3/4}}\right)u^2}du= \frac{2\sqrt{\pi}}{w^{1/4}}\bigg(1+\frac{3}{8w^{3/4}}+\frac{27}{128w^{3/2}}+O\big(w^{-9/4}\big)\bigg).
\end{equation*}
Combining with \eqref{L1}, we obtain
\begin{eqnarray*}	\hat{I}_{2,n}&=&-\frac{2^{13/24}e^{\frac{1}{2}\zeta'(-1)}e^{5/4}}{w^{57/32}}e^{-\frac{w^3}{24}+\frac{2}{3}w^{3/2}-2w^{3/4}}Q_n\big(u_0,q(u_0),q'(u_0)\big)\\
	&&\hspace{3cm} \times\Pi(u_0,w)K(0,u_0)\bigg(1+\frac{39}{8w^{3/4}}+O\big(w^{-3/2}\big)\bigg).
\end{eqnarray*}
Applying the same arguments as in the computation of $P_1(w)$, we get from \eqref{pesti} and \eqref{PestiII}
\begin{eqnarray}
	P_2(w) &=&\frac{\pi^{3/2}}{2^{10/3}w^{3/2}}e^{-\frac{w^3}{24}}\sum_{n=0}^{\infty}\frac{1}{w^n}\bigg(\hat{I}_{1,n}+\hat{I}_{2,n}\bigg)\nonumber\\
	 &=&\kappa e^{-\frac{w^3}{12}+\frac{2}{3}w^{3/2}-2w^{3/4}}w^{-113/32}K(0,u_0)\bigg(1+\frac{17}{2w^{3/4}}+O\big(w^{-3/2}\big)\bigg),\label{P2:4}
\end{eqnarray}
with $\kappa$ given in \eqref{P1fincon}. At this point recall \eqref{pr:22}, which gives
\begin{equation*}
	K(0,u_0) = w\bigg(1-\frac{2}{w^{3/4}}+O\big(w^{-3/2}\big)\bigg),\hspace{0.5cm} \textrm{as} \ w\rightarrow\infty.
\end{equation*}
Inserting this asymptotics into \eqref{P2:4}, we find
\begin{equation}\label{P2:5}
	P_2(w) = \kappa e^{-\frac{w^3}{12}+\frac{2}{3}w^{3/2}-2w^{3/4}}w^{-81/32}\bigg(1+\frac{13}{2w^{3/4}}+O\big(w^{-3/2}\big)\bigg).
\end{equation}


\subsection{Asymptotic expansion of $P(w)$}
Since $P(w)=P_1(w)+P_2(w)$, we add \eqref{P1fin} and \eqref{P2:5}, which leads us to
\begin{equation*}\label{Pexp}
	P(w) = 2\kappa e^{-\frac{w^3}{12}+\frac{2}{3}w^{3/2}-2w^{3/4}}w^{-81/32}\bigg(1+\frac{15}{2w^{3/4}}+O\big(w^{-3/2}\big)\bigg),\hspace{0.5cm}w\rightarrow\infty\,,
\end{equation*}
with the constant $\kappa$
\begin{equation*}
	\kappa=2^{-91/24}\pi^{3/2}e^{\frac{1}{2}\zeta'(-1)}e^{5/4}.
\end{equation*}
Plugging into \eqref{in:14f}, we get Theorem \ref{main}.


\section{Proof of Corollary \ref{main2}}\label{proofcor}
Since $\hat{P}(\cdot,t)$ is an even function, see \eqref{in:70} or \eqref{in:12} and \eqref{in:13}, we use the symmetry of the density function $\hat{P}(t)$. As $t\rightarrow\infty$,
\begin{eqnarray*}
	\mathbb{P}(|\tcal|>t) &=& 2\int_t^{\infty}\hat{P}(s)ds\\
	&=&2\tau\int_t^{\infty}e^{-\frac{4}{3}\varphi(s)}s^{-81/32}\bigg(1+\frac{15}{4s^{3/4}}+O\big(s^{-3/2}\big)\bigg)ds.
\end{eqnarray*}
Let us use the change of variables
\begin{equation*}
	u = \varphi(t) = t^3-2t^{3/2}+3t^{3/4},
\end{equation*}
hence as $u\rightarrow\infty$
\begin{equation*}
	t = \sqrt[3]{u}\bigg(1+\frac{2}{3u^{1/2}}-\frac{1}{u^{3/4}}+O\big(u^{-1}\big)\bigg),\hspace{0.5cm} dt = \frac{1}{3u^{2/3}}\bigg(1-\frac{1}{3u^{1/2}}+\frac{5}{4u^{3/4}}+O\big(u^{-1}\big)\bigg)du
\end{equation*}
and we obtain
\begin{equation*}
	\mathbb{P}(|\tcal|>t) = \frac{2\tau}{3}\int_{\varphi(t)}^{\infty}e^{-\frac{4}{3}u}u^{-145/96}\bigg(1+\frac{15}{4u^{1/4}}+O\big(u^{-1/2}\big)\bigg)du.
\end{equation*}
The latter integral can be readily integrated by parts and we obtain \eqref{m:3}.


\begin{appendix}

\section{Riemann-Hilbert problem for $\Phi$}\label{app1}
As mentioned in Section \ref{plusinf}, the matrix function $\Psi(\z;s)$ described in that section can be defined as the solution to a Riemann-Hilbert problem (RHP). The exact formulation of this RHP is as follows. Define the rays $\Ga_j$ for $j=1,2,3,4$ as
\begin{equation*}\label{rhp-1}
\begin{aligned}
		\Ga_1&=\{\z : \arg \z = \pi/6\}\,, \quad &\Ga_2&=\{\z : \arg \z = 5\pi/6\}\,, \\ 
		\quad \Ga_3&=\{\z : \arg \z = -5\pi/6\}\,, \quad &\Ga_4&=\{\z : \arg \z = -\pi/6\}\,,
\end{aligned}
\end{equation*}
and give them the orientation from left to right, as shown in Figure \ref{fig1}.  The plus- (resp. minus-) side of the contour is then the side which is lies to the left (resp. right) side of the contour when facing in the direction of the orientation, as shown in Figure \ref{fig1}.  The RHP consists in finding the piecewise analytic $2\times 2$ matrix function $\Psi(\z;s) = \big(\psi_{ij}(\z;s)\big)_{i,j=1}^2$ such that
\begin{enumerate}
	\item $\Psi(\z;s)$ is analytic for $\z \in \C \setminus \cup_{j=1}^4 \Ga_j$.
	\item For $\z \in  \cup_{j=1}^4 \Ga_j$, $\Psi(\z;s)$ takes limiting values from the plus- and minus-sides, and these limiting values are related by
\begin{equation*}\label{rhp-2}
	\Psi_+(\z;s)=\Psi_-(\z;s)j_\Psi(\z)\,,
\end{equation*}
where
\begin{equation*}\label{rhp-3}
	j_\Psi(\z)=\left\{
\begin{aligned}
	j_1=\frac{1}{2}\begin{pmatrix} 3 & i \\ i & 1 \end{pmatrix} \qquad \z \in \Ga_1 \cup \Ga_2\,, \\
	j_2=\frac{1}{2}\begin{pmatrix} 1 & i \\ i & 3 \end{pmatrix} \qquad \z \in \Ga_3 \cup \Ga_4\,. \\
\end{aligned}\right.
\end{equation*}
	\item As $\z \rightarrow \infty$ $$\Psi(\z;s) = \big(I+O(\z^{-1})\big)\begin{pmatrix} \cos\left(\frac{4}{3}\z^3+s\z\right) & \sin\left(\frac{4}{3}\z^3+s\z\right) \\ - 		\sin\left(\frac{4}{3}\z^3+s\z\right) & \cos\left(\frac{4}{3}\z^3+s\z\right)\end{pmatrix}. $$
	\item $\Psi(\z,s)$ is bounded for $\z$ close to the origin.
\end{enumerate}

\begin{figure}[tbh]
  \begin{center}
  \psfragscanon
  \psfrag{1}{\footnotesize{$\Gamma_1$}}
  \psfrag{2}{\footnotesize{$\Gamma_2$}}
  \psfrag{3}{\footnotesize{$\Gamma_3$}}
  \psfrag{4}{\footnotesize{$\Gamma_4$}}
  \psfrag{5}{$\bf{+}$}
  \psfrag{6}{$\bf{-}$}
  \psfrag{7}{$\bf{+}$}
  \psfrag{8}{$\bf{-}$}
  \psfrag{9}{$\bf{+}$}
  \psfrag{10}{$\bf{-}$}
  \psfrag{11}{$\bf{+}$}
  \psfrag{12}{$\bf{-}$}
  \includegraphics[width=8cm,height=4cm]{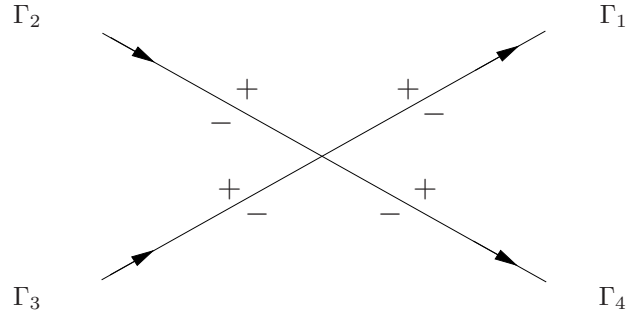}
  \end{center}
  \caption{The jump contour associated with the $\Psi$-RHP}
  \label{fig1}
\end{figure}
This RHP is uniquely solvable \cite{BIK}, and the vector $\Phi(\z,s)$ defined in \eqref{in:6} is the first column of $\Psi$.  That is, 
\begin{equation*}
	\Phi_1(\z;s) = \psi_{11}(\z;s),\hspace{1cm} \Phi_2(\z;s) = \psi_{21}(\z;s).
\end{equation*}
Since the jump matrices satisfy the symmetry relation 
\begin{equation*}\label{rhp-4}
	j_1^{-1} = \sg_3 j_2 \sg_3\,,\hspace{1cm} \sg_3 = \begin{pmatrix}
	1 & 0\\
	0 & -1\\
	\end{pmatrix},
\end{equation*}
we also have
\begin{equation*}
	\Psi(-\z;s)=\sg_3 \Psi(\z;s) \sg_3,\hspace{0.5cm}\z\in\mathbb{C},\ s\in\mathbb{R},
\end{equation*}
which is \eqref{rhp-5}.
\smallskip
	
To justify \eqref{rhp-6}, we appeal to a result from the paper \cite{BR} of Baik and Rains.  In that paper the authors consider a different RHP for a different Lax pair for the Painlev\'{e} II equation.  Our RHP can be transformed to theirs by a series of explicit transformations.  After doing so, equation \eqref{rhp-6} is equivalent to Lemma 2.1 (iii) in \cite{BR}.

\section{A review of Laplace's method}\label{app2}
Let us give a quick review of Laplace's method.  Recall that Laplace's method can be applied to the integral
\begin{equation*}
\int_a^b e^{-wH(u)} f(u)\, du\,,
\end{equation*}
when the function $H(u)$ has a unique minimum at $u_0 \in (a,b)$, where $a$ and/or $b$ may be infinite and $f(u_0)\neq 0$.  If we assume that the functions $H$ and $f$ are analytic, then Laplace's method gives that, as $w \rightarrow\infty$,
\begin{equation*}
\begin{aligned}
&\int_a^b e^{-wH(u)} f(u)\, du= e^{-wH(u_0)} \frac{f(u_0)\sqrt{2\pi}}{\sqrt{wH''(u_0)}}  \Bigg[1+\frac{1}{2wH''(u_0)} \\
&\left. \times\left(\frac{5}{12} \left(\frac{H'''(u_0)}{H''(u_0)}\right)^2 -\frac{H^{(4)}(u_0)}{4H''(u_0)}+\frac{f''(u_0)}{f(u_0)}+\frac{f'''(u_0)}{3f(u_0)}-\frac{H'''(u_0)f'(u_0)}{H''(u_0)f(u_0)}\right)+O(w^{-2})\right].
\end{aligned}
\end{equation*}
Another situation might arise if $H(u)$ has no critical point in the interior of $(a,b)$ but a global minimum is achieved on the boundary, say at $u_0=a$ with $H'(a)>0$ and $f(a)\neq 0$. In this case Laplace's method gives as $w\rightarrow\infty$,
\begin{equation*}
\begin{aligned}
	\int_a^be^{-wH(u)}f(u)du &= \frac{e^{-wH(a)}f(a)}{wH'(a)}\Bigg[1+\frac{1}{w}\bigg(\frac{f'(a)}{f(a)H'(a)}-\frac{H''(a)}{(H'(a))^2}\bigg) \\
	&+\frac{1}{w^2}\bigg(\frac{f''(a)}{f(a)(H'(a))^2}-\frac{3H''(a)f'(a)}{f(a)(H'(a))^3}-\frac{H'''(a)}{(H'(a))^3}\bigg)+O\big(w^{-3}\big)\Bigg].
	\end{aligned}
\end{equation*}
Both of these cases can be generalized to iterated integrals with appropriate assumptions on $H$ and $f$.

\section{The formula of Dotsenko}\label{dot}
Using the so-called replica Bethe ansatz, Dotsenko recently derived a formula for the distribution of the fluctuations of the endpoint of a continuum directed polymer in a random medium.  In this appendix we show that his formula coincides with the known formulas of \cite{M-FQR} and \cite{Sch} for $\tcal$, given in \eqref{in:70} and \eqref{in:12}, respectively.
In his article \cite{Dot}, Dotsenko derives an expression for the distribution function $W(x)$ for the fluctuations of a directed polymer's endpoint.  The conjectured relation between $W(x)$ and the density function for $\tcal$ is
\begin{equation}\label{eq6}
	W(x) = \int_x^{\infty}P(w)dw, \quad \textrm{or equivalently} \quad W(x) = \int_x^{\infty}\int_{-\infty}^{\infty}\frac{1}{4}\hat{P}(2^{-2/3}m,2^{-4/3}t)dm\, dt,
\end{equation}
where $P(w)$ and $\hat{P}(m,t)$ are defined in \eqref{in:14b} and \eqref{in:70}, respectively.   The formula of Dotsenko is (see equations $(6),(7)$ and $(8)$ in \cite{Dot})
\begin{equation}\label{eq1}
	W(x) = \int_{-\infty}^{\infty}\fcal_1(s)\Bigg[\int_0^{\infty}\int_0^{\infty}\rho_s(\omega,\omega')\Phi_{\omega'\omega}(s,x)\,d\omega\, d\omega'\Bigg]ds\,,
\end{equation}
where $\fcal_1$ is the Tracy-Widom GOE distribution function, and $\B_s$ and $\rho_s$ are as defined in \eqref{in:4} and \eqref{in:5}, respectively.  The function $\Phi_{\omega\omega'}$ is given as the integral
\begin{eqnarray}\label{eq3}
	\Phi_{\omega\omega'}(s,x) &=& -\frac{1}{2}\int_0^{\infty}\Bigg[\bigg(\frac{\partial}{\partial\omega}-\frac{\partial}{\partial\omega'}\bigg)\Psi(\omega+\frac{s}{2}+y;x)\Psi(\omega'+\frac{s}{2}+y;-x)\nonumber\\
	&&+\bigg(\frac{\partial}{\partial\omega}+\frac{\partial}{\partial\omega'}\bigg)\Psi(\omega+\frac{s}{2}-y;x)\Psi(\omega'+\frac{s}{2}+y;-x)\Bigg]dy\,,
\end{eqnarray}
with
\begin{equation}\label{eq4}
	\Psi(\omega;x) = \frac{1}{2\pi i}\int_{\mathcal{L}}\exp\bigg[\frac{z^3}{6}-\frac{x}{4}z^2-\omega z\bigg]dz\,,
\end{equation}
and $\mathcal{L}$ is any contour that starts at the point infinity with argument $\textnormal{arg}\ z=-\frac{\pi}{3}$ and ends at the conjugated point.  Notice that we have used slightly different variables than in \cite{Dot}.  In order to prove the relation \eqref{eq6}, we need to show that for any $t \in  \R$,
\begin{equation}\label{eq12}
	-W'(t) = \frac{1}{4}\int_{-\infty}^{\infty}\hat{P}(2^{-2/3}m,2^{-4/3}t)dm \,.
\end{equation}
From a quick examination of \eqref{in:70} and \eqref{eq1}, one sees that \eqref{eq12} holds given the following lemma, which is subsequently proven.

\begin{lemma}For any $x_1,x_2\geq 0, m\in\mathbb{R}$, and $t \in \R$,
\begin{equation}\label{eq13}
	-\frac{\partial}{\partial t}\Phi_{x_2x_1}(m,t) = 2^{-5/3}\psi(2^{1/3}x_1;-2^{-4/3}t,2^{-2/3}m)\psi(2^{1/3}x_2;2^{-4/3}t,2^{-2/3}m).
	\end{equation}
\end{lemma}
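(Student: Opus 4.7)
The plan is to substitute the contour representation $\Psi(\omega;x)=\frac{1}{2\pi i}\int_{\mathcal{L}} e^{z^3/6-xz^2/4-\omega z}\,dz$ from \eqref{eq4} directly into Dotsenko's formula \eqref{eq3}, perform the $y$-integration, and exploit a rational-function cancellation that occurs when $-\partial_t$ is applied. After substitution, each product of $\Psi$'s in \eqref{eq3} becomes a double contour integral in auxiliary variables $z,w$; the derivatives $(\partial_{x_2}\mp\partial_{x_1})$ pull down factors $w\mp z$, while the $y$-dependence of the two terms in \eqref{eq3} is $-y(z+w)$ and $y(z-w)$ respectively. Carrying out the $y$-integration produces rational factors $\tfrac{1}{z+w}$ and $\tfrac{1}{w-z}$ which, after the algebraic identity
\[
\frac{w-z}{z+w}+\frac{z+w}{z-w}=\frac{4zw}{z^2-w^2},
\]
collapse to a single rational factor $\tfrac{zw}{z^2-w^2}$ multiplying a cubic exponential whose only $t$-dependence is $-t(z^2-w^2)/4$.

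Applying $-\partial_t$ then pulls down precisely $(z^2-w^2)/4$, which cancels the denominator. The remaining double integral factorises as a product of two single-variable contour integrals of the form $\frac{1}{2\pi i}\int z\,e^{z^3/6\mp xz^2/4-az}\,dz=-\partial_\omega\Psi(a;\pm x)$, yielding an identity of the shape $-\partial_t\Phi_{x_2 x_1}(m,t)=c\cdot\partial_\omega\Psi(x_2+m/2;t)\cdot\partial_\omega\Psi(x_1+m/2;-t)$ for an explicit constant $c$. To identify this product with the right-hand side of \eqref{eq13}, I complete the cube in the exponent of \eqref{eq4} to obtain the closed form $\Psi(\omega;x)=2^{1/3}e^{-x^3/24-\omega x/2}\Ai(2^{-8/3}x^2+2^{1/3}\omega)$, differentiate in $\omega$, and compare with $\psi(X;T,M)=2e^{XT}[T\Ai(T^2+M+X)+\Ai'(T^2+M+X)]$ under the substitution $X=2^{1/3}x_j$, $T=\mp 2^{-4/3}t$, $M=2^{-2/3}m$. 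Each $\partial_\omega\Psi$ then equals $\psi$ up to an exponential prefactor involving $t^3$ and $mt$; those prefactors cancel in the product, and tracking the powers of $2$ yields the claimed coefficient $2^{-5/3}$.

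The main technical obstacle is justifying the interchange of the $y$-integral with the contour integrals. The Laplace integral $\int_0^\infty e^{y(z-w)}\,dy$ converges only when $\Re(z-w)<0$, a condition that is not uniform on $\mathcal{L}\times\mathcal{L}$ for the natural choice of contour. The remedy is to deform the $w$-contour to lie strictly to the right of the $z$-contour (permissible because the integrand of each $\Psi$ is entire in its spectral variable), verify absolute convergence of the resulting double contour integral, and keep track of the fact that the apparent pole at $z=w$ produced by the intermediate $1/(z^2-w^2)$ is neutralised by the $\partial_t$ step before any contour crossing takes place. Once this subtlety is addressed, the remainder of the argument reduces to the bookkeeping of signs and constants outlined above.
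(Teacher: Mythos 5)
Your proof follows essentially the same line as the paper's: substitute the contour representation \eqref{eq4} into \eqref{eq3}, perform the $y$-integration to obtain the rational factor $\zeta_1\zeta_2/(\zeta_1^2-\zeta_2^2)$, apply $-\partial_t$ to cancel the denominator, and match the resulting factorized double contour integral against the product of $\psi$'s. Your explicit handling of the Fubini step by deforming one copy of $\mathcal{L}$ so that the $y$-Laplace integrals converge uniformly is compatible with, and arguably cleaner than, the paper's direct pointwise bound on the $y$-integrand.
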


\begin{proof}
Consider first the right hand side of \eqref{eq13}.   Since
\begin{equation*}
	\textnormal{Ai}(z) = \frac{1}{2\pi i}\int_{\mathcal{L}}e^{\frac{1}{3}t^3-tz}dt,\hspace{0.5cm}z\in\mathbb{C}\,,
\end{equation*}
we have
\begin{equation*}
	\psi\big(2^{1/3}x;-2^{-4/3}t,2^{-2/3}m\big)=\frac{e^{-\frac{xt}{2}}}{\pi i}\int_{\mathcal{L}}\big(-2^{-4/3}t-\zeta\big)e^{\frac{\zeta^3}{3}-\zeta(2^{-8/3}t^2+2^{-2/3}m+2^{1/3}x)}d\zeta\,,
\end{equation*}
which after the change of variables $\zeta=2^{-1/3}\eta-2^{-4/3}t$, simplifies to
\begin{equation*}\label{eq14}
	\psi\big(2^{1/3}x;-2^{-4/3}t,2^{-2/3}m\big)=-\frac{2^{-2/3}}{\pi i}e^{\frac{t^3}{24}+\frac{tm}{4}}\int_{\mathcal{L}}\eta e^{\frac{\eta^3}{6}-\eta(x+\frac{m}{2})-\frac{t}{4}\eta^2}d\eta\,.
\end{equation*}
The right hand side of \eqref{eq13} can now be written as
\begin{equation}\label{eq144}
\begin{aligned}
	2^{-5/3}\psi(2^{1/3}&x_1;-2^{-4/3}t,2^{-2/3}m)\psi(2^{1/3}x_2;2^{-4/3}t,2^{-2/3}m) \\
	&=-\frac{1}{8\pi^2}\int_{\mathcal{L}}\int_{\mathcal{L}}\eta_1\eta_2e^{\frac{\eta_1^3+\eta_2^3}{6}-\eta_1(x_1+\frac{m}{2})
	-\eta_2(x_2+\frac{m}{2})-\frac{t}{4}(\eta_1^2-\eta_2^2)}d\eta_1d\eta_2\,.
	\end{aligned}
\end{equation}
For the left hand side of \eqref{eq4} we have
\begin{equation*}
\begin{aligned}
	\Psi(x_2+\frac{m}{2}+y;t)&\Psi(x_1+\frac{m}{2}+y;-t)\\
	&=-\frac{1}{4\pi^2}\int_{\mathcal{L}}\int_{\mathcal{L}}e^{\frac{\zeta_1^3+\zeta_2^3}{6}-\zeta_1(x_2+\frac{m}{2})-\zeta_2(x_1+\frac{m}{2})-\frac{t}{4}(\zeta_1^2-\zeta_2^2)
	-y(\zeta_1+\zeta_2)}d\zeta_1d\zeta_2\,,
	\end{aligned}
\end{equation*}
and
\begin{equation*}
\begin{aligned}
\Psi(x_2+\frac{m}{2}-y;t)&\Psi(x_1+\frac{m}{2}+y;-t)\\
	&=-\frac{1}{4\pi^2}\int_{\mathcal{L}}\int_{\mathcal{L}}e^{\frac{\zeta_1^3+\zeta_2^3}{6}-\zeta_1(x_2+\frac{m}{2})-\zeta_2(x_1+\frac{m}{2})-\frac{t}{4}(\zeta^2-\zeta_2^2)
	+y(\zeta_1-\zeta_2)}d\zeta_1d\zeta_2.
\end{aligned}
\end{equation*}
Plugging these formulas into \eqref{eq3} and interchanging differentiation and integration, we obtain
\begin{eqnarray}\label{eq15}
	\Phi_{x_2x_1}(m,t) &=&-\frac{1}{8\pi^2}\int_0^{\infty}\Bigg[\int_{\mathcal{L}}\int_{\mathcal{L}}e^{\frac{\zeta_1^3+\zeta_2^3}{6}-\zeta_1(x_2+\frac{m}{2})-\zeta_2(x_1+\frac{m}{2})-\frac{t}{4}(\zeta_1^2-\zeta_2^2)}\nonumber\\
	&&\times \bigg(e^{-y(\zeta_1+\zeta_2)}\big(\zeta_1-\zeta_2\big)+e^{y(\zeta_1-\zeta_2)}\big(\zeta_1+\zeta_2\big)\bigg)d\zeta_1d\zeta_2\Bigg]dy.
\end{eqnarray}
The function
\begin{equation*}\label{eq16}
	f(y) := e^{-y(\zeta_1+\zeta_2)}\big(\zeta_1-\zeta_2\big)+e^{y(\zeta_1-\zeta_2)}\big(\zeta_1+\zeta_2\big),\ \ \zeta_i\in\mathcal{L},\ \ \zeta_1\neq \zeta_2\,,
\end{equation*}
is integrable over $[0,\infty)$ since for $\zeta_i\in\mathcal{L}$ we have $\textnormal{Re}\ \zeta_i>0$, and if $\textnormal{Re}\ \zeta_1<\textnormal{Re}\ \zeta_2$, then
\begin{equation*}
	\big|f(y)\big|\leq C\ \textnormal{Re}(\zeta_2)e^{-y\textnormal{Re}(\zeta_2-\zeta_1)}.
\end{equation*}
On the other hand for $\textnormal{Re}\ \zeta_1>\textnormal{Re}\ \zeta_2$ we have for $\delta>0$
\begin{equation*}
	\big|f(y)\big|\leq C\ \textnormal{Re}(\zeta_1)\Big(e^{-2y\textnormal{Re}(\zeta_1)}+e^{-y\delta}\Big),
\end{equation*}
hence integrability follows.  Since for $\z_1 \ne \z_2$,
\begin{equation*}
	\int_0^{\infty}f(y)dy = \frac{\zeta_1-\zeta_2}{\zeta_1+\zeta_2}-\frac{\zeta_1+\zeta_2}{\zeta_1-\zeta_2} = -\frac{4\zeta_1\zeta_2}{\zeta_1^2-\zeta_2^2}\,,
\end{equation*}
we can use Fubini's theorem in \eqref{eq15} to obtain
\begin{equation*}\label{eq17}
	\Phi_{x_2x_1}(m,-t) = \frac{1}{2\pi^2}\int_{\mathcal{L}}\int_{\mathcal{L}}\frac{\zeta_1\zeta_2}{\zeta_1^2-\zeta_2^2}e^{\frac{\zeta_1^3+\zeta_2^3}{6}-\zeta_1(x_2+\frac{m}{2})-\zeta_2(x_1+\frac{m}{2})+\frac{t}{4}(\zeta_1^2-\zeta_2^2)}d\zeta_1d\zeta_2.
\end{equation*}
This integral appears singular, but one of the contours $\mathcal{L}$ could be deformed from the beginning such that $\z_1$ and $\z_2$ never coincide.  Now differentiating with respect to $t$ (and exchanging integration and differentiation) gives
\begin{equation*}
	\frac{\partial}{\partial t}\Phi_{x_2x_1}(m,-t) = \frac{1}{8\pi^2}\int_{\mathcal{L}}\int_{\mathcal{L}}\zeta_1\zeta_2\,e^{\frac{\zeta_1^3+\zeta_2^3}{6}-\zeta_1(x_2+\frac{m}{2})-\zeta_2(x_1+\frac{m}{2})+\frac{t}{4}(\zeta_1^2-\zeta_2^2)}d\zeta_1d\zeta_2.
\end{equation*}
Comparing the last line with \eqref{eq144}, we have shown
\begin{equation*}
	\frac{\partial}{\partial t}\Phi_{x_2x_1}(m,-t)= 2^{-5/3}\psi(2^{1/3}x_1;-2^{-4/3}t,2^{-2/3}m)\psi(2^{1/3}x_2;2^{-4/3}t,2^{-2/3}m).
\end{equation*}
The change of variables $t \mapsto -t$ gives \eqref{eq13}.
\end{proof}

\end{appendix}

\end{document}